\theoremstyle{plain}
\newtheorem{theorem}{Theorem}
\newtheorem*{theorem*}{Theorem}
\newtheorem{lemma}[theorem]{Lemma}
\newtheorem*{lemma*}{Lemma}
\newtheorem{proposition}[theorem]{Proposition}
\newtheorem*{proposition*}{Proposition}
\newtheorem{corollary}[theorem]{Corollary}
\newtheorem*{corollary*}{Corollary}
\theoremstyle{definition}
\newtheorem{definition}{Definition}
\newtheorem*{definition*}{Definition}
\newtheorem*{example*}{Example}
\theoremstyle{remark}
\newtheorem{remark}{Remark}
\newtheorem*{remark*}{Remark}
\newtheorem*{conjecture*}{Conjecture}
\newtheorem{problem}{Problem}
\newtheorem*{problem*}{Problem}
\newcommand*{\RR}{\mathbb{R}}
\newcommand*{\dd}{\mathrm{d}}
\DeclareMathOperator{\Id}{Id}
\DeclareMathOperator{\im}{im}
\newcommand*{\contr}[1]{\iota_{#1}}
\newcommand*{\liedv}[1]{\mathcal{L}_{#1}}
\newcommand*{\Reeb}{\mathcal{R}}
\title{Inverse problem and equivalent contact systems}
\author[1,2]{Manuel de León\thanks{\textit{email}:
mdeleon@icmat.es (ORCID: 0000-0001-8201-1624)}}
\author[3]{Jordi Gaset\thanks{\textit{email}:
jordi.gaset@unir.net (ORCID: 0000-0001-8796-3149)}} 
\author[1]{Manuel Lainz\thanks{\textit{email}:
manuel.lainz@icmat.es. (ORCID: 0000-0002-2368-5853)}}
\affil[1]{Instituto de Ciencias Matemáticas (CSIC-UAM-UC3M-UCM) \protect\\
Calle Nicolás Cabrera, 13-15, Campus Cantoblanco, UAM, 28049 Madrid, Spain}
\affil[2]{Real Academia de Ciencias Exactas, Físicas y Naturales\protect\\
Calle Valverde, 22, 28004, Madrid, Spain }
\affil[3]{Escuela Superior de Ingeniería y Tecnología, Universidad Internacional de La Rioja, Spain.\protect}
\date{\today}
\begin{document}

\maketitle
\begin{abstract}
We present several results on the inverse problem  and equivalent contact Lagrangian systems. These problems naturally lead to consider smooth transformations on the $z$ variable (i.e., reparametrizations of the action). We present the extended contact Lagrangian systems to formalize this notion. With this structure we define horizontal equivalence of Lagrangians, which generalizes the symplectic case. We also present some results on the inverse problem for extended contact systems.
\end{abstract}
\setcounter{tocdepth}{2}

{
\def\baselinestretch{0.97}
\small
\def\addvspace#1{\vskip 1pt}
\parskip 0pt plus 0.1mm
\tableofcontents
}

\section{Introduction}
As it is well known, given a Lagrangian function $L:TQ \to \RR$ one obtains the \emph{Euler-Lagrange} vector field $\xi_L$, which is a second order differential equation such that its integral curves $c(t)$ are solutions of the Euler-Lagrange equation for $L$.
However, we might ask ourselves the inverse question. Consider a SODE $\xi$ on $TQ$ such that its integral curves $c(t)$ satisfy the equation
\begin{equation}
    \ddot{c}^i(t) = f^i(c(t), \dot{c}(t))
\end{equation}
for some local functions $f^i:U \subseteq TQ \to \mathbb{R}$. The so-called \emph{inverse problem of calculus of variations} ask if the equation above can be derived through a Lagrangian $L$. Namely, is $\xi = \xi_L$ for some Lagrangian $L$? A partial solution to this problem is given by Helmholtz conditions~\cite{Crampin1981a}. The inverse problem can be solved if and only if there exist functions $g_{ij}$ such that
\begin{subequations}
    \begin{align}
        \det{g_{ij}} &= 0\\
        g_{ij} &= g_{ji} \\
        \frac{\partial g_{ij}}{\partial q^k} &= \frac{\partial g_{ik}}{\partial q^j}\\
        \frac{\dd g_{ij}}{\dd t} + \frac{1}{2}\frac{\partial f^k}{\partial \dot{q}^j} g_{ik} + \frac{1}{2}\frac{\partial f^k}{\partial \dot{q}^i} g_{kj} &= 0  \\
        g_{ik} \left(\frac{\dd }{\dd t} \left(\frac{\partial f^k}{\partial \dot{q}^j}\right) -2 \frac{\partial f^k}{\partial q^j}  - \frac{1}{2} \frac{\partial f^l}{\partial \dot{q}^j} \frac{\partial f^k}{\partial \dot{q}^l}\right) &= 
        g_{jk} \left(\frac{\dd }{\dd t} \left(\frac{\partial f^k}{\partial \dot{q}^i}\right) -2 \frac{\partial f^k}{\partial q^i}  - \frac{1}{2} \frac{\partial f^l}{\partial \dot{q}^i} \frac{\partial f^k}{\partial \dot{q}^l}\right).
    \end{align}
\end{subequations}

These conditions can be characterized geometrically as follows~\cite{Crampin1981a}: the inverse problem can be solved for a SODE $\xi$ if and only if there exist a $2$-form $\omega$ on $TQ$ of maximal rank such that $\liedv{\xi} \omega = 0$ and such that all vertical subspaces are Lagrangian both for $\omega$ and for $\contr{H} \dd \omega$ for any horizontal vector field $H$.

A related problem is the problem of \emph{equivalent Lagrangians} \cite{Ranada1991} which can be stated as follows. Given a Lagrangian $L$, find every Lagrangian $\tilde{L}$ that have the same Euler-Lagrange vector field, that is, $\xi_L = \xi_{\tilde{L}}$. A sufficient condition is that their difference is a \emph{total derivative}. 


Nonetheless, there are systems that cannot be described through Euler-Lagrange equations. Some interesting examples \cite{Goto2016,RMS-2017,Ciaglia2018, Bravetti2018, Simoes2020b,GM-2021}, such as thermodynamic systems at equilibrium or several mechanical systems with friction can be modeled through a contact Hamiltonian system \cite{Geiges2008,BHD-2016,Bravetti2017,deLeon2019a}. This kind of systems can be also modeled with a Lagrangian $L:TQ \times \mathbb{R} \to \mathbb{R}$ that, apart from the positions and velocities, it depends on an extra variable $z$ that can be interpreted as the action \cite{deLeon2019,GGMRR-2019b}. From this Lagrangian we obtain a Herglotz vector field $\xi_L$ which is a SODE on $TQ\times \mathbb{R}$ (meaning that its integral curves $(c,v,\zeta)(t)$ satisfy $\dot{c}=v)$. Moreover, the integral curves $(c,\dot{c},\zeta)(t)$ of $\xi_L$ are precisely the ones that satisfy the Herglotz equations
\begin{subequations}\label{eq:herglotz}
\begin{align}
    \frac{\partial L}{\partial q^i}(c(t), \dot{c}(t),\zeta(t)) - \frac{\dd }{\dd t} \left(\frac{\partial L}{\partial \dot{q}^i} (c(t), \dot{c}(t)),\zeta(t) \right) \label{eq:herglotz1} &= 
    \frac{\partial L}{\partial \dot{q}^i} \frac{\partial L}{\partial z} \\
    \dot{z} &= L. \label{eq:herglotz2}
\end{align}
\end{subequations}

The Lagrangian formulation can also be derived through a variational principle, the so-called \emph{Herglotz principle} in which the action is defined through a non-autonomous ODE \cite{Herglotz1930}. In Section~\ref{sec:prelim} we explain the theory of contact geometry and contact Hamiltonian and Lagrangian systems necessary for the development of this paper.

Given the recent interest on contact Lagrangian systems, we think it will be useful to have a tool, similar to the Helmholtz conditions. That is, a straightforward procedure to decide weather a given SODE on $TQ \times \mathbb{R}$ are Herglotz equations or not. In Section~\ref{sec:inv_problem1}, we naively try to formulate the inverse problem. However, this problem turns out to be trivial. Indeed, by looking at~\eqref{eq:herglotz2}, one sees that if a SODE on $TQ\times \mathbb{R}$ is a Herglotz vector field, then the Lagrangian is necessarily the coefficient of $\partial / \partial z$. Nonetheless, we obtain an interesting geometrical characterization of Herglotz vector fields that can be compared to the one obtained by~\cite{Sarlet1982}.

The equivalences between Lagrangian systems suffers from a similar problem. Our aim is to answer the question: given a regular Lagrangian $L:TQ \times \mathbb{R} \to \mathbb{R}$, find all regular Lagrangians that share the same dynamics. But what do we mean by \enquote{the same}? Taken literally, we would need to find all the Lagrangians $\bar{L}:TQ\times \RR \to \RR$ for which $\xi_L = \xi_{\bar{L}}$. This problem, however, would be trivial. Since $\xi_L(z) = L$ (hence $\xi_{\bar{L}}(z)=\bar{L}$), $L$ and $\bar{L}$ are equivalent if and only if they are equal. 

In order to obtain a more useful statement of inverse problem and equivalent Lagrangians in the contact setting, we need a weaker version of the problems. For this, we analyze what means for two SODEs on $TQ \times \mathbb{R}$ to represent the same dynamics. Thinking on the physical examples we are familiar with, the $q^i$ and $\dot{q}^i$ variables usually represent quantities that are directly observable on the system (positions and velocities). However, the variable $z$ represents the action or a thermodynamic potential such as the entropy or the internal energy, none of which are directly measurable. Hence, we think that a notion of equivalence \enquote{up to change of variables in $z$} is interesting for these applications. In Sections~\ref{sec:extended_systems} we formalize this notion with the introduction of \emph{extended contact systems}.

A summary of results of equivalent contact Hamiltonian is presented in section \ref{sec:equiv_ham}. Since contact Lagrangian systems (even the extended ones) are contact Hamiltonian systems, this will be general results with a clear geometrical meaning.

In Section~\ref{equiv.extended}, we study equivalences between contact Lagrangian systems, keeping in mind that those systems have two geometrical structures: a contact structure and an extended tangent bundle structure, which have been studied on the previous section. Analyzing the interaction of the equivalences on both structures allows us to obtain a notion of equivalence that generalizes the one for the Euler-Lagrange equations in the case that the Lagrangian does not depend on $z$. Some examples are presented exploring possible applications.

Finally, in Section~\ref{sec:inv_problem2} we study the inverse problem \enquote{up to change of variables in $z$}. A Helmholtz-like conditions and a geometric characterization are obtained.

\section{Contact Lagrangian systems}\label{sec:prelim}

(See \cite{deLeon2019a, deLeon2017, GGMRR-2019b} for more details.)

Let $P$ be a $(2n+1)$-dimensional manifold. A \emph{contact structure} in $P$ is a non-degenerate $1$-form $\eta\in\Omega^1(P)$ such that $\eta\wedge(\dd\eta)^n$ is a volume form. The \emph{Reeb vector field} of $\eta$ is the unique vector field $\mathcal{R}\in\mathfrak{X}(P)$ such that
$$
\iota_\mathcal{R}\eta=1\,;\quad \iota_\mathcal{R}\dd\eta=0\,.
$$
Around any point of a contact manifold, there exist \emph{Darboux coordinates}, $(q^i, p_i, z)$, such that
\begin{equation}
    \eta = \dd z  - p_i \dd q^i
\end{equation}
and
\begin{equation}
    \Reeb = \frac{\partial }{\partial z}.
\end{equation}

A \emph{contact Hamiltonian system} on $P$ is the pair $(\eta,H)$, where $H\in C^\infty(P)$. The Hamiltonian vector field $X_H\in\mathfrak{X}(P)$ is the unique solution of the equations
\begin{subequations}\label{eq:Herglotz_Hamiltonian}
\begin{align}
    \iota_{X_H}\dd \eta&=\dd H-\left(\mathcal{L}_{\mathcal{R}}H\right) \eta\,,
    \\
    \iota_{X_H}\eta&=-H\,,
\end{align}    
\end{subequations}
or, equivalently,
\begin{align*}
    \mathcal{L}_{X_H}\eta&=-\left(\mathcal{L}_{\mathcal{R}}H\right) \eta\,,
    \\
    \iota_{X_H}\eta&=-H\,.
\end{align*}
In Darboux coordinates
\begin{equation}
    X_H = \frac{\partial H}{\partial q^i} \frac{\partial }{\partial q^i} - \left( \frac{\partial H}{\partial p_i} + p_i \frac{\partial H}{\partial z} \right) \frac{\partial }{\partial p_i} + \left(p_i \frac{\partial H}{\partial p_i} - H \right) \frac{\partial }{\partial z}
\end{equation}

A \emph{contact Lagrangian system} is given by a Lagrangian function $L:P=TQ \times \mathbb{R} \to \mathbb{R}$, where $Q$ is an $n$-dimensional manifold. The canonical endomorphism $S$ and the Liouville vector field $\Delta$ of $TQ$ extend to $TQ \times \mathbb{R}$ naturally due to the trivial factorization into $TQ$ and $\mathbb{R}$. If we consider coordinates $(q^i,\dot{q}^i,z)$ on $TQ \times \mathbb{R}$ then, their local expression are
$$
S=\dd q^i\otimes \frac{\partial}{\partial\dot{q}^i}\,;\quad \Delta=\dot{q}^i\frac{\partial}{\partial\dot{q}^i}\,.
$$
Given a contact Lagrangian system we can construct a contact Lagrangian form and a Lagrangian energy 
$$
\eta_L=\dd z-S^*\dd L\,;\quad E_L=\Delta(L)-L\,.
$$
$\eta_L$ is a contact form if, and only if, $L$ is regular, that is, the matrix $\left(\frac{\partial^2L}{\partial \dot{q}^i\dot{q}^j} \right)$ is non-singular everywhere. Then, $(\eta_L,E_L)$ form a contact Hamiltonian system, whose Hamiltonian vector field $\xi_L$ is called the \emph{Herglotz vector field} of $L$.

There is a variational principle behind a contact Lagrangian systems, called \emph{Herglotz's variational problem}. Given the space $\Omega(q_0,q_1)$ of paths $\gamma:[0,1] \to Q$ such that $\gamma(0)=q_0$ and $\gamma(1)=q_1$, we define
\begin{equation}
    Z_{L,z_0}: \Omega(q_0,q_1) \to \mathcal{C}^\infty([0,1] \to \mathbb{R})
\end{equation}
such that for each curve $\gamma \in \Omega(q_0, q_1)$, $Z_{L,z_0}(\gamma)$ is the curve that solves the initial value problem
\begin{equation}
    \begin{cases}
        \frac{\dd Z_{L,z_0}(\gamma)}{\dd t} &= L(\gamma,\dot{\gamma}, Z_{L,z_0}(\gamma)),\\
       Z_{L,z_0}(\gamma)(0) &= z_0.
    \end{cases}
\end{equation}

We now define the action
\begin{equation}
    \begin{split}
        \mathcal{A}: \Omega(q_0,q_1) &\to \mathbb{R},\\
        \gamma &\to Z_{L,z_0}(\gamma)(1) - z_0 = \int_0^1 L(\gamma, \dot{\gamma}, Z_{L,z_0}(\gamma)) \dd t.
    \end{split}
\end{equation}
The curves that minimize $\mathcal{A}$ are the ones that satisfies \emph{Herglotz's equations}
    \begin{subequations}
        \begin{align}
            \frac{\partial L}{\partial q^i} - \frac{\dd }{\dd t} \frac{\partial L}{\partial \dot{q}^i}  &= 
           \frac{\partial L}{\partial \dot{q}^i} \frac{\partial L}{\partial \zeta} \\
            \dot{z} &= L.    
        \end{align}
    \end{subequations}
These are the same equations as \eqref{eq:Herglotz_Hamiltonian} in coordinates for the system $(\eta_L,E_L)$.

\section{Inverse problem for contact Lagrangian systems}\label{sec:inv_problem1}

Naively, we can formulate the inverse problem for contact Lagrangian systems as follows:
\begin{problem}[Inverse problem]
    Given a SODE $\xi$ on $TQ \times \RR$, determine if there exists a contact Lagrangian $L:TQ \times \RR \to \RR$ such that $\xi = \xi_L$.
\end{problem}
Nevertheless, this problem is much more restrictive than its symplectic equivalent. Indeed, the local form of the vector field $\xi_L$ is
\begin{equation}\label{eq:Herglotzvf}
    \xi_{L} = \dot{q}^i \frac{\partial}{\partial q} + h_{L}^i \frac{\partial}{\partial \dot{q}} + L \frac{\partial}{\partial z},
\end{equation}
where $h_L^i$ is the unique solution to the equation
\begin{equation}
    {\frac{\partial^2 L}{\partial \dot{q}^j \partial \dot{q}^i}}h_{L}^i + {\dot{q}^i  {\frac{\partial^2 L}{\partial \dot{q}^j \partial q^i}} 
+ {L  {\frac{\partial^2 L}{\partial \dot{q}^j \partial z}}}  \frac{\partial L}{\partial q^j} =
- \frac{\partial L}{\partial \dot{q}^j} \frac{\partial L}{\partial z}}.
\end{equation}

Let $\xi$ be a SODE on $TQ \times \RR$ with a coordinate expression
\begin{equation}
    \xi = \dot{q}^i \frac{\partial}{\partial q} + a^i \frac{\partial}{\partial \dot{q}} + b \frac{\partial}{\partial z}.
\end{equation}
Comparing with the expression for $\xi_L$, by looking at the last coefficient we immediately find out that, if $\xi = \xi_L$, then $b=L$. That is, a SODE can only be the Herglotz vector field of its last component. In addition, by comparing the rest of the coordinates, one can see that the SODE is a Herglotz  vector field if and only if $a^i = h_L^i$, with $L=b$. Indeed, the vector field $\xi$ is the Herglotz vector field of a regular Lagrangian if and only
\begin{subequations}
    \begin{gather}
        {\frac{\partial^2 b}{\partial \dot{q}^j \partial \dot{q}^i}}a^i + {\dot{q}^i  {\frac{\partial^2 b}{\partial \dot{q}^j \partial q^i}} 
        + {b  {\frac{\partial^2 b}{\partial \dot{q}^j \partial z}}} = \frac{\partial b}{\partial q^j}
        - \frac{\partial b}{\partial \dot{q}^j} \frac{\partial b}{\partial z}}\label{eq:inverse_herglotz},\\
        \det\left({\frac{\partial^2 L}{\partial \dot{q}^j \partial \dot{q}^i}}\right) \neq 0.\label{eq:inverse_regular}
    \end{gather} 
\end{subequations}
where the first equation is just $a^i = h_L^i$ with $L=b$, and the second one, from requiring that $L$ is regular.

This condition has a nice geometric interpretation, similar to \cite{Crampin1981a}, that we will now explore.

\subsection{The inverse problem and the Herglotz distribution}
Given a regular contact Lagrangian $L:TQ \times \RR \to \mathbb{R}$, we define its \emph{Herglotz distribution} by
\begin{equation}
    \mathcal{H}_L = \ker \eta_L.
\end{equation}

\begin{theorem}\label{thm:inv_problem_contact_distribution}
    Let $\xi$ be a SODE on $TQ\times \RR$. Then $\xi$ is a Herglotz vector field if and only if there exists a contact distribution $\mathcal{H} \subset T(TQ \times \RR)$ such that $\xi$ is an infinitesimal contactomorphism for $\mathcal{H}$, every vertical subspace of $TQ\times\RR \to Q \times \mathbb{R}$ is Legendrian and $\frac{\partial }{\partial z}$ is not tangent to $\mathcal{H}$.
\end{theorem}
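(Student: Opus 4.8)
The plan is to prove the two implications separately: the forward one by taking $\mathcal{H} = \mathcal{H}_L$ and checking each hypothesis directly, and the converse by producing a normalized local defining form for $\mathcal{H}$, reading off a candidate Lagrangian, and matching it against the characterization \eqref{eq:inverse_herglotz}--\eqref{eq:inverse_regular} obtained above. For the forward direction, suppose $\xi = \xi_L$ with $L$ regular and take $\mathcal{H} = \mathcal{H}_L = \ker\eta_L$, which is a contact distribution precisely because regularity makes $\eta_L$ a contact form. Computing $S^*\dd L = \frac{\partial L}{\partial\dot{q}^i}\dd q^i$ gives $\eta_L = \dd z - \frac{\partial L}{\partial\dot{q}^i}\dd q^i$, whence $\eta_L(\tfrac{\partial}{\partial z}) = 1 \neq 0$ (so $\tfrac{\partial}{\partial z}$ is not tangent) and $\eta_L(\tfrac{\partial}{\partial\dot{q}^i}) = 0$ (so the vertical space $V = \mathrm{span}\{\tfrac{\partial}{\partial\dot{q}^i}\}$ lies in $\mathcal{H}_L$). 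Since $\dd\eta_L$ carries no $\dd\dot{q}^a\wedge\dd\dot{q}^b$ term, $V$ is automatically isotropic, hence Legendrian by dimension count. Finally $\xi_L$ is an infinitesimal contactomorphism because, as the contact Hamiltonian vector field of $(\eta_L,E_L)$, it satisfies $\liedv{\xi_L}\eta_L = -(\liedv{\Reeb}E_L)\,\eta_L$, a multiple of $\eta_L$.

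For the converse, let $\eta$ be a local $1$-form with $\mathcal{H} = \ker\eta$. Non-tangency of $\tfrac{\partial}{\partial z}$ lets me normalize $\eta(\tfrac{\partial}{\partial z}) = 1$, and the Legendrian hypothesis forces $\eta(\tfrac{\partial}{\partial\dot{q}^i}) = 0$; hence $\eta = \dd z + c_i\,\dd q^i$ with no velocity components, isotropy of $V$ being then automatic. Writing $\xi = \dot{q}^i\tfrac{\partial}{\partial q^i} + a^i\tfrac{\partial}{\partial\dot{q}^i} + b\tfrac{\partial}{\partial z}$ and using Cartan's formula I obtain $\liedv{\xi}\eta = \dd b + c_i\,\dd\dot{q}^i + \xi(c_i)\,\dd q^i$. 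The infinitesimal-contactomorphism hypothesis means $\liedv{\xi}\eta = \lambda\eta$ for some function $\lambda$: matching the $\dd\dot{q}^j$ coefficient yields $c_j = -\tfrac{\partial b}{\partial\dot{q}^j}$, which identifies $\eta$ with $\eta_b$ for the candidate Lagrangian $L := b$; matching the $\dd z$ coefficient fixes $\lambda = \tfrac{\partial b}{\partial z}$; and matching the $\dd q^j$ coefficients reproduces precisely equation \eqref{eq:inverse_herglotz} for $a^i$ with $L = b$. Note that the SODE assumption enters here, through $\dd q^i(\xi) = \dot{q}^i$ and $\dd z(\xi) = b$.

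It remains to recover the regularity condition \eqref{eq:inverse_regular}, and this is where the contact hypothesis is consumed. I would expand $\eta\wedge(\dd\eta)^n$ for $\eta = \dd z - \tfrac{\partial b}{\partial\dot{q}^i}\dd q^i$ and observe that, among the terms of $\dd\eta = -\dd\!\left(\tfrac{\partial b}{\partial\dot{q}^i}\right)\wedge\dd q^i$, only the velocity-position block $-\tfrac{\partial^2 b}{\partial\dot{q}^i\partial\dot{q}^j}\,\dd\dot{q}^j\wedge\dd q^i$ can supply the factors $\dd\dot{q}^1\wedge\cdots\wedge\dd\dot{q}^n$ needed to reach a top form, the $\dd z$ being provided by $\eta$. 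Thus $\eta\wedge(\dd\eta)^n = \pm\,n!\,\det\!\left(\tfrac{\partial^2 b}{\partial\dot{q}^i\partial\dot{q}^j}\right)\dd z\wedge\dd q^1\wedge\dd\dot{q}^1\wedge\cdots\wedge\dd q^n\wedge\dd\dot{q}^n$, so $\mathcal{H}$ is contact if and only if \eqref{eq:inverse_regular} holds. Together, \eqref{eq:inverse_herglotz} and \eqref{eq:inverse_regular} with $L = b$ are exactly the characterization established before the theorem, so $\xi = \xi_b$ is a Herglotz vector field. The main obstacle is precisely this last step: isolating the Hessian determinant inside $\eta\wedge(\dd\eta)^n$ and arguing that no mixed term contributes to the top form, since it is the bridge from the coordinate-free contact condition to the analytic regularity requirement; by contrast the coefficient matching, although the conceptual core, is routine once $\eta$ has been normalized.
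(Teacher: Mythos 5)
Your proposal is correct and follows essentially the same route as the paper's proof: take $\mathcal{H}=\mathcal{H}_L$ for the forward implication, and for the converse normalize a local defining $1$-form to $\dd z + c_i\,\dd q^i$ using the non-tangency and Legendrian hypotheses, match coefficients in $\liedv{\xi}\eta = \lambda\eta$ to identify the candidate Lagrangian $L=b$ and recover the Herglotz equation, with regularity coming from the contact condition. The only difference is that you fill in details the paper leaves implicit, namely the isotropy/dimension-count check in the forward direction and the explicit expansion of $\eta\wedge(\dd\eta)^n$ showing that contactness is exactly the non-vanishing of the velocity Hessian determinant of $b$.
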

\begin{proof}
    Note that if $\xi=\xi_L$ is a Herglotz vector field, then $\mathcal{H}=\mathcal{H}_L$ has all the desired properties.
    
    Conversely, let $\mathcal{H}$ is a contact distribution such that $\xi$ is an infinitesimal contactomorphism. Since every vertical subspace of $TQ\times\RR$ is Lagrangian and $\frac{\partial }{\partial z}$ is tangent to the distribution, we can see that $\mathcal{H} = \ker \eta$ for some $\eta$ of the form
    \begin{equation*}
        \eta = \dd z - y_i \dd q^i,
    \end{equation*}
    where $y_i$ are local functions. Let
    \begin{equation}
        \xi = \dot{q}^i \frac{\partial}{\partial q} + a^i \frac{\partial}{\partial \dot{q}} + b \frac{\partial}{\partial z}.
    \end{equation}
    
    Since $\xi$ is an infinitesimal conformal contactomorphism,
    \begin{equation*}
        \liedv{\xi} \eta = \dd b - \xi (y_i) \dd q^i - y_i \dd \dot{q}^i = g \eta,
    \end{equation*}
    contracting with every coordinate basis vector field,
    \begin{equation}
        \begin{cases}
            \frac{\partial b}{\partial q^i}  - \xi (y_i) &= - g y_i,\\
            \frac{\partial b}{\partial \dot{q}^i} -   y_i &= 0,\\
            \frac{\partial b}{\partial z} &= g.
        \end{cases}
    \end{equation}
    Combining these equations, we obtain
    \begin{equation}
        \xi\left(\frac{\partial b}{\partial \dot{q}^i}\right)  - \frac{\partial b}{\partial q^i}= \frac{\partial b}{\partial z} \frac{\partial b}{\partial \dot{q}^i},
    \end{equation}
    which is precisely Herglotz equation for the Lagrangian $L=b$~\eqref{eq:inverse_herglotz}. Moreover, the regularity condition~\eqref{eq:inverse_regular} comes from the fact that $\eta=\eta_L$ for $L=b$, and it is a contact form.
\end{proof}
\section{Equivalence in contact Hamiltonian systems}\label{sec:equiv_ham}
\label{sec:equivalent_contact}
In order to analyze the equivalence of Lagrangian systems, we will first systematically study the notion of equivalence on the Hamiltonian case.

Given a contact manifold $(M,\eta)$, the Hamiltonian vector field  $X_f$, where $f:M\to \RR$, is the unique infinitesimal contactomorphism such that $\eta(X_f) = -f$.

Let $(M,\eta, H)$ and $(N,\bar{\eta}, \bar{H})$ be Hamiltonian systems. We denote by $X_H$ the Hamiltonian vector field of $H:M\to\RR$ with respect to $\eta$ and by $\bar{X}_{\bar H}$ the Hamiltonian vector field of $\bar{H}:N \to \RR$ with respect to $\bar \eta$. We also denote by $\Reeb$ and $\bar{\Reeb}$ the Reeb vector fields of $\eta$ and $\bar{\eta}$, respectively.

\begin{definition}
    Two Hamiltonian systems $(M,\eta, H)$ and $(N=M,\bar{\eta}, \bar{H})$ are:
    \begin{itemize}
        \item \textit{Conformally equivalent} if $\bar{\eta}=f\eta$ and $\bar{H}=fH$  for some non-vanishing $f:M\to \RR$.
        \item \textit{Dynamically equivalent} if $X_H = {\bar X}_{\bar H}$.
    \end{itemize}

\end{definition}

When two systems are equivalent ``up to diffeomorphism'', we will say they are \textit{similar}, and the corresponding diffeomorphism will be called a \textit{similarity}. Namely, a diffeomorphism $F: M \to N$ is a: 
\begin{itemize}
    \item \emph{Strict similarity} if $F^* \bar{\eta} = \eta$ and $F^* \bar H = H$.
    \item \emph{Conformal similarity} if $F^* \bar{\eta} = f \eta$ and $F^* \bar{H} = f H$ for some non-vanishing $f:M\to \RR$.
    \item \emph{Dynamical similarity} if $F_* X_H = {\bar X}_{\bar H}$.
\end{itemize}

\begin{remark}\label{rem:factor}
Two systems are conformally (resp. dynamically) equivalent if, and only if, they are conformally (dynamically) similar with the identity as the similarity.


\end{remark}

A strict (resp.\ conformal) similarity is a strict (resp.\ conformal) contactomorphism. The converse is given by the following result.

\begin{proposition}\label{prop:conformal_dynamical}
    If $F:M \to N$ is a strict (resp.\ conformal) contactomorphism, then it is a strict (conformal) similarity if, and only if, it is a dynamical similarity.
\end{proposition}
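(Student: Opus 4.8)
The plan is to reduce every assertion to the uniqueness characterization recalled just above the definition: $X_H$ is the unique infinitesimal contactomorphism of $\eta$ with $\iota_{X_H}\eta = -H$. I would settle the strict case first and then bootstrap the conformal case from it. Throughout, the only analytic input is naturality of the interior product and the Lie derivative under the diffeomorphism $F$, i.e. $F^*(\iota_{F_*X}\beta) = \iota_X(F^*\beta)$ and $F^*(\mathcal{L}_{F_*X}\beta) = \mathcal{L}_X(F^*\beta)$.

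For the strict case, suppose $F^*\bar{\eta} = \eta$, equivalently $F_*\eta = \bar{\eta}$, and set $Y := F_*X_H$. First I would check that $Y$ is an infinitesimal contactomorphism of $\bar{\eta}$: pushing the identity $\mathcal{L}_{X_H}\eta = -(\mathcal{L}_{\mathcal{R}}H)\eta$ forward by $F$ and using $F_*\eta = \bar{\eta}$ yields $\mathcal{L}_Y\bar{\eta} = \mu\,\bar{\eta}$ with $\mu = -(\mathcal{L}_{\mathcal{R}}H)\circ F^{-1}$, so $Y$ is a legitimate candidate Hamiltonian vector field on $N$. By the uniqueness characterization, $Y = \bar{X}_{\bar{H}}$ holds if and only if $\iota_Y\bar{\eta} = -\bar{H}$. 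Naturality of the contraction gives $\iota_Y\bar{\eta} = (F^{-1})^*(\iota_{X_H}\eta) = -(F^{-1})^*H$, so the condition $\iota_Y\bar{\eta} = -\bar{H}$ is equivalent to $(F^{-1})^*H = \bar{H}$, that is, $F^*\bar{H} = H$. Since the contact-form condition already holds by hypothesis, this is exactly the defining equation of a strict similarity, and both directions of the equivalence are captured at once.

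For the conformal case I would first isolate the purely algebraic fact that a conformal rescaling does not change the dynamics: if $\bar{\eta} = f\eta$ and $\bar{H} = fH$ on the same manifold with $f$ non-vanishing, then $X_H = \bar{X}_{\bar{H}}$. This again follows from uniqueness, since a Leibniz computation gives $\mathcal{L}_{X_H}(f\eta) = \bigl((X_H f)/f - \mathcal{L}_{\mathcal{R}}H\bigr)(f\eta)$, so $X_H$ is an infinitesimal contactomorphism of $f\eta$, while $\iota_{X_H}(f\eta) = -fH = -\bar{H}$ pins it down. With this lemma in hand, given a conformal contactomorphism $F$ with $F^*\bar{\eta} = f\eta$, I would regard $F$ as a strict contactomorphism from $(M, f\eta)$ to $(N,\bar{\eta})$ and apply the strict case to the Hamiltonian $fH$ on $M$: it yields $F_*\bigl(X^{f\eta}_{fH}\bigr) = \bar{X}_{\bar{H}}$ if and only if $F^*\bar{H} = fH$. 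Since the lemma identifies $X^{f\eta}_{fH}$ with the original $X_H$, this reads $F_*X_H = \bar{X}_{\bar{H}} \iff F^*\bar{H} = fH$, which is precisely the conformal similarity condition.

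The naturality manipulations are routine bookkeeping; the one step that demands genuine care is the conformal-rescaling lemma, since it is what legitimizes the reduction of the conformal case to the strict case, and there the factor $f$ must be tracked consistently through both the Lie-derivative condition and the contraction $\iota_{(\cdot)}(f\eta)$. Beyond this I expect no real obstacle, because the uniqueness characterization collapses each claim to matching a single contraction against $\bar{\eta}$ once the infinitesimal-contactomorphism property has been verified.
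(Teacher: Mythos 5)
Your proof is correct, but it is organized in the opposite direction from the paper's. The paper proves the conformal case directly in one computation: it pulls back the single identity $\iota_{\bar{X}_{\bar{H}}}\bar{\eta}=-\bar{H}$ through $F$, uses $F^*\bar{\eta}=f\eta$ to get $f\,\eta\bigl((F^{-1})_*\bar{X}_{\bar{H}}\bigr)=-F^*\bar{H}$, and then reads off both implications (invoking uniqueness of $X_H$ for the forward direction); the strict case is obtained afterwards by setting $f=1$. You instead prove the strict case first and bootstrap to the conformal case via your rescaling lemma, namely that $(f\eta,fH)$ and $(\eta,H)$ on the same manifold have the same Hamiltonian vector field. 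Two remarks on what each route buys. First, your rescaling lemma is precisely the statement the paper deduces \emph{as a corollary} of this proposition (``conformally equivalent systems are dynamically equivalent''), so your argument proves that corollary independently and then uses it — a legitimate, non-circular inversion of the paper's logical order, and arguably a cleaner modularization. Second, your proof is more careful at the one point where the paper is terse: invoking uniqueness of the Hamiltonian vector field requires knowing that the candidate field ($Y=F_*X_H$ in your notation, $(F^{-1})_*\bar{X}_{\bar{H}}$ in the paper's) is an infinitesimal contactomorphism of the relevant contact form; you verify this explicitly by transporting the Lie-derivative identity, whereas the paper leaves it implicit (it holds because a conformal contactomorphism preserves the contact distribution, $\ker f\eta=\ker\eta$, but this is never stated). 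The price of your route is length: the paper's single pullback computation settles both cases at once.
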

\begin{proof}
    We prove it in the conformal case. The strict case follows by setting $f=1$.
    
    Since $F$ is a conformal contactomorphism, $F^* \bar{\eta} = f\eta$ for some non-vanishing $f:M\rightarrow\RR$. Pulling back by $F$ the equation
    \begin{equation}
        \bar{\eta}(\bar{X}_{\bar{H}}) = -\bar{H}\,,
    \end{equation}
    we obtain
    \begin{equation}
        F^* \bar{\eta}(\bar{X}_{\bar{H}}) = f \eta((F^{-1})_*(X_H)) =- F^* \bar H\,.
    \end{equation}

If $F$ is a conformal similarity, $F^* \bar H = fH$. Therefore, $f\eta{(F^{-1})_* \bar{X}_{\bar{H}}}= -fH$, and $(F^{-1})_* \bar{X}_{\bar{H}}=X_H$ because the Hamiltonian vector field for $H$ is unique. Conversely, if $(F^{-1})_* \bar{X}_{\bar{H}}=X_H$, then $F^* \bar H = -f\eta(X_H) = f H$.
\end{proof}
In particular, conformally equivalent systems are dynamically equivalent.
We will now discuss some properties of the Hamiltonian system are preserved by each kind of equivalence.

\begin{proposition}
    Every contact Hamiltonian system $(M,\eta,H)$ such that $H$ does not vanish, is conformally equivalent to $(M,\bar{\eta} = -\frac{\eta}{H},\bar{H} = -1)$, so that $X_H = \bar{X}_{\bar{H}} = \bar{\Reeb}$.
\end{proposition}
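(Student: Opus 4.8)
The plan is to produce the conformal equivalence explicitly and then read off the dynamical consequence from the results already established. First I would set $f = -\tfrac{1}{H}$, which is well defined and nowhere vanishing precisely because $H$ never vanishes. With this choice $\bar{\eta} = f\eta$ and $fH = -1 = \bar{H}$, so the two systems satisfy the definition of being conformally equivalent verbatim. The only point requiring a line of justification is that $\bar{\eta} = f\eta$ is genuinely a contact form, so that the system $(M,\bar{\eta},\bar{H})$ is admissible and possesses a Reeb field. This follows from the standard computation $\bar{\eta}\wedge(\dd\bar{\eta})^n = f^{n+1}\,\eta\wedge(\dd\eta)^n$, where all cross terms involving $\dd f\wedge\eta$ drop out after wedging with $f\eta$ because $\eta\wedge\eta = 0$; since $f$ is nowhere zero and $\eta$ is contact, the right-hand side is a volume form.

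Having the conformal equivalence, the identity $X_H = \bar{X}_{\bar{H}}$ is immediate: by the observation following \Cref{prop:conformal_dynamical}, conformally equivalent systems are dynamically equivalent, which is exactly the assertion $X_H = \bar{X}_{\bar{H}}$.

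It then remains to identify $\bar{X}_{\bar{H}}$ with the Reeb field $\bar{\Reeb}$ of $\bar{\eta}$. Here I would use that $\bar{H} = -1$ is constant, hence $\dd\bar{H} = 0$ and $\liedv{\bar{\Reeb}}\bar{H} = 0$. Substituting into the defining equations~\eqref{eq:Herglotz_Hamiltonian} for $\bar{X}_{\bar{H}}$ yields $\iota_{\bar{X}_{\bar{H}}}\dd\bar{\eta} = 0$ and $\iota_{\bar{X}_{\bar{H}}}\bar{\eta} = 1$, which are precisely the two equations characterizing $\bar{\Reeb}$. By uniqueness of the Reeb field, $\bar{X}_{\bar{H}} = \bar{\Reeb}$, completing the chain $X_H = \bar{X}_{\bar{H}} = \bar{\Reeb}$.

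There is no serious obstacle in this argument, since the proposition is in essence a normalization statement. The only steps needing genuine care are the verification that multiplying by the nowhere-vanishing function $f = -1/H$ preserves the contact condition, and the observation that constancy of $\bar{H}$ collapses the contact Hamiltonian equations directly to the Reeb equations.
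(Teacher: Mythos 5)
Your proof is correct, and it follows the route the paper intends: the paper states this proposition without proof, taking it as an immediate consequence of the definition of conformal equivalence (with $f=-1/H$), of the fact noted after \cref{prop:conformal_dynamical} that conformal equivalence implies dynamical equivalence, and of the observation that for constant $\bar{H}=-1$ the equations~\eqref{eq:Herglotz_Hamiltonian} reduce to the defining equations of the Reeb vector field. Your added verifications — that $\bar{\eta}\wedge(\dd\bar{\eta})^n=f^{n+1}\,\eta\wedge(\dd\eta)^n$ is a volume form and that uniqueness of the Reeb field closes the argument — are exactly the details the paper leaves implicit.
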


\begin{proposition}
A conformal similarity of contact Hamiltonian systems preserves the zero set of $H$. That is $F:M \to N$ maps the zero set of $H$ to the zero set of $\bar{H}$.
\end{proposition}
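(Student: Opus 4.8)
The plan is to extract the single defining relation that controls the Hamiltonians under a conformal similarity and read off the claim directly. By definition, a conformal similarity $F:M\to N$ satisfies $F^*\bar\eta = f\eta$ and $F^*\bar H = fH$ for some nowhere-vanishing $f:M\to\RR$. Only the second relation is relevant here: spelled out pointwise it reads $\bar H(F(x)) = f(x)\,H(x)$ for every $x\in M$. I would take this identity as the starting point.

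From it the forward inclusion is immediate. If $x\in M$ lies in the zero set of $H$, then $\bar H(F(x)) = f(x)\,H(x) = 0$, so $F(x)$ lies in the zero set of $\bar H$; hence $F$ maps $\{H=0\}$ into $\{\bar H=0\}$. For the reverse containment I would invoke that $F$ is a diffeomorphism, so every point of $N$ is $F(x)$ for a unique $x\in M$, together with the non-vanishing of $f$: if $\bar H(F(x)) = 0$, then $f(x)\,H(x) = 0$, and dividing by the nonzero scalar $f(x)$ forces $H(x)=0$. Combining the two directions shows that $F$ restricts to a bijection between the two zero sets, which is the assertion.

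The one point worth flagging is where each hypothesis enters: the conformal relation on the contact forms $F^*\bar\eta = f\eta$ plays no direct role in the argument, whereas the non-vanishing of the conformal factor $f$ is essential — it is precisely what guarantees the zero set is preserved \emph{exactly}, rather than merely mapped into a possibly larger set (a strict similarity, the case $f=1$, is of course covered as well). I do not expect any genuine obstacle: the statement is an immediate consequence of $F^*\bar H = fH$ with $f$ invertible, and the only care needed is to use bijectivity of $F$ for the reverse inclusion.
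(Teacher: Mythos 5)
Your proof is correct and coincides with the argument the paper implicitly relies on: the paper states this proposition without any proof, treating it as immediate from the defining relation $F^*\bar{H} = fH$ with $f$ nowhere vanishing. Your explicit handling of both inclusions --- using invertibility of $f$ and bijectivity of $F$ so that the zero set is preserved exactly rather than merely mapped into $\{\bar{H}=0\}$ --- is precisely the routine verification the authors omitted.
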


The zero set of the Hamiltonian has interesting geometric properties. For example, the Hamiltonian vector field can only be tangent to Legendrian manifolds whenever $H=0$~\cite{deLeon2020a}. This property is important in the thermodynamic formalism, where the equilibrium states are represented by a Legendrian submanifold of the zero set of $H$.

This result by \cite{Bravetti2020} also indicates that the Hamiltonian vector field has a special behavior at the points where $H$ vanishes. The Hamiltonian is a Reeb vector field in the set $\{H \neq 0\}$, and it is a reparametrization of the Liouville vector field in $H^{-1}(0)$. 



We remark, however,  that the dynamical behavior of the Hamiltonian vector fields at the zero set of $H$ is not necessarily different to the behavior outside it. Indeed, we will provide an example in which a dynamical equivalence does not preserve the zero set of the Hamiltonian.

Consider the following systems on $\mathbb{R}^3$: $\eta=\dd z-p\dd q$, $H=pq+z$ and $\bar{\eta}=\dd z+p\dd q$, $\bar{H}=z-pq$. One can check that $X_H=\bar{X}_{\bar{H}}$, thus they are dynamically equivalent systems (that is, the identity is a dynamical similarity), but the zero set of $H$ and $\bar{H}$ are different.
\section{Extended contact systems}\label{sec:extended_systems}

 As we argued in the introduction, we need to consider the inverse problem and equivalent Lagrangians ~\enquote{up to a change on $z$}. In order to do that, we will~\enquote{forget} about the projection $z:TQ \times \mathbb{R} \to \mathbb{R}$. We formalize this notion with the \emph{extended tangent bundle} (definition \ref{def:extended.tangent}). This section is devoted to present and explore the extended tangent bundle and how to define a contact Lagrangian formalism and its corresponding contact Hamiltonian formalism.

\subsection{Extended tangent bundles}

The main object in this section is the \emph{extended tangent bundle} over a manifold $Q$:

\begin{definition}\label{def:extended.tangent}
    An \emph{extended tangent bundle} $P$ of $Q$ is a line bundle $\rho:P \to TQ$. $\rho$ is called the \emph{mechanical state function}. We also denote by $\rho_0: P\to Q$ to the map $\rho_0 = \tau_Q \circ \rho$, where $\tau_Q:TQ \to Q$ is the canonical projection.
\end{definition}

We can think of the extended tangent bundle as the contact phase space. The projection $\rho$ provides the mechanical variables (the positions and velocities). The extra degree of freedom on the bundle represent the action. However, we do not prescribe how the action can be measured.

\begin{definition}
    An \emph{action function} of the extended tangent bundle $\rho:P\rightarrow TQ$ is a surjective map $\zeta: P \to \mathbb{R}$ such that $TP=\ker T\rho\oplus\ker T\zeta$.
\end{definition}

$$
\xymatrix{
TP \ar[rr]^{\tau_P} \ar[dd]_{T{\rho}} \ &  \ & P \ar[rr]^{\zeta} \ar[dd]_{{\rho}} \ar[ddrr]^{{\rho}_0}\ &  \   &\mathbb{R} 
\\
\ & \ & \ & \ 
\\
 T(TQ) \ar@/_1.5pc/[uu]_{\lambda}\ar[rr]^{\tau_{TQ}}\ & \ & TQ \ar[rr]^{\tau_Q}\ & \ &Q  
}
$$

Given a natural coordinate system $(q^i, \dot{q}^i)$, we can construct a coordinate system $(\rho^* q^i, \rho^* \dot{q}^i, \zeta)$ on $P$. From now on, we will abuse notation and omit the $\rho^*$ when using the coordinates on $P$. Along the text we will use two action functions $z$ and $\zeta$, with the respective coordinate systems $(q^i,\dot{q}^i,z)$ and $(q^i,\dot{q}^i,\zeta)$. Notice that the coordinate basis of vector fields on $P$ depend not only on the coordinates $(q^i, \dot{q}^i)$ on $TQ$, but also on the action functions. To make this clear, we will denote them as $\left(\frac{\partial}{\partial q^i}, \frac{\partial}{\partial \dot{q}^i},\frac{\partial}{\partial z}\right)$ and $\left(\left(\frac{\partial}{\partial q^i}\right)_\zeta, \left(\frac{\partial}{\partial \dot{q}^i}\right)_\zeta,\frac{\partial}{\partial \zeta}\right)$ respectively. They are related by:

\begin{equation}\label{eq:cfractions}
        \left(\frac{\partial }{\partial q^i}\right)_{\zeta} = 
        \frac{\partial}{\partial q^i} -  \frac{\frac{\partial \zeta}{\partial q^i}}{\frac{\partial \zeta}{\partial z}} \frac{\partial }{\partial z};\quad
        \left(\frac{\partial }{\partial \dot{q}^i}\right)_{\zeta} =
        \frac{\partial}{\partial \dot{q}^i} - \frac{\frac{\partial \zeta}{\partial \dot{q}^i}}{\frac{\partial \zeta}{\partial z}} \frac{\partial }{\partial z};\quad
        \frac{\partial \zeta}{\partial z}\frac{\partial}{\partial \zeta} = \frac{\partial}{\partial z}.
\end{equation}


We proceed to study the geometric structure of the extended tangent bundle given an action function $\zeta$. We have an isomorphism $(\rho, \zeta): P \to TQ \times \mathbb{R}$. However, only the projection onto the first factor is independent of the choice of the action function. The decomposition $TP=\ker\rho\oplus\ker\zeta$ induces a section $\lambda^\zeta$ of $T\rho$ as follows: given an element $y\in T(TQ)$, $\lambda^\zeta(y)\in TP$ is the unique element such that $T\rho(\lambda^\zeta(y))=y$ and $T\zeta(\lambda^\zeta(y))=0$. With this section we can lift the canonical elements of $TQ$ to $P$. In local coordinates we have that
$$
\lambda=\dd q^i\otimes \left(\frac{\partial }{\partial q^i}\right)_{\zeta} + \dd \dot{q}^i\otimes \left(\frac{\partial }{\partial \dot{q}^i}\right)_{\zeta}\,.
$$

\begin{definition}
    The \emph{extended almost tangent structure} on $P$ by the action function $\rho$ is the $(1,1)$ tensor field $S^\zeta=\lambda^\zeta\circ S \circ T\rho$.
    
    The \emph{extended Liouville vector field} on $P$ by the action function $\rho$ is $\Delta^\zeta=\lambda^\zeta\circ\Delta\circ \rho$.
\end{definition}

In local coordinates, $S^\zeta$ is given by
\begin{equation}
    S^\zeta = \dd {q}^i \otimes \left(\frac{\partial }{\partial \dot{q}^i} \right)_{\zeta} = \dd {q}^i \otimes \left(\frac{\partial }{\partial \dot{q}^i} -  \frac{\frac{\partial \zeta}{\partial q^i}}{\frac{\partial \zeta}{\partial z}} \frac{\partial }{\partial z}  \right).
\end{equation}
Note that $S^{\xi} = S^{\bar{\xi}}$ if and only if $\xi- \bar{\xi}$ does not depend on $\dot{q^i}$. Also, $\im S^\zeta = \ker{T \rho_0} \subseteq \ker{S}^\zeta$, and $\ker{S^\zeta}$ is an integrable rank $n+1$ distribution. The extended Liouville vector field is given by 
\begin{equation}
    \Delta^\zeta =  \dot{q}^i \left(\frac{\partial }{\partial \dot{q}^i} \right)_{\zeta} = \dot{q}^i \left(\frac{\partial }{\partial \dot{q}^i} -  \frac{\frac{\partial \zeta}{\partial q^i}}{\frac{\partial \zeta}{\partial z}} \frac{\partial }{\partial z}  \right)
\end{equation}
which is precisely the infinitesimal generator of the $\mathbb{R}$-action $\chi(a)v = \exp(a) v$ on the vector bundle $(\rho_0,\zeta):P \to Q \times \mathbb{R}$. This allows us to define SODEs on $TP$. 
\begin{definition}
    A vector field $\xi$ of $P$ is an \emph{extended SODE} (Second Order Differential Equation) if $S^\zeta (\xi) = \Delta^\zeta$ for some action function $\zeta$.
\end{definition}

An extended SODE $\xi$ has the local expression
\begin{equation}
    \xi = \dot{q}^i\frac{\partial}{\partial q^i} + a^i \frac{\partial}{\partial \dot{q}^i} + b \frac{\partial}{\partial z}\,,
\end{equation}

for $a^i, b\in C^\infty (P)$.

\begin{lemma}\label{lem:sode}
   $\xi$ is an extended SODE if, and only if, $S(\rho_*\xi_p) = \Delta_{\rho(p)}$ for all $p \in P$.
\end{lemma}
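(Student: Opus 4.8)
The plan is to reduce the equation $S^\zeta(\xi) = \Delta^\zeta$ to the claimed intrinsic identity by unwinding the definitions of $S^\zeta$ and $\Delta^\zeta$ and exploiting the injectivity of the lift $\lambda^\zeta$. Fix a point $p \in P$ and an action function $\zeta$. By definition $S^\zeta = \lambda^\zeta \circ S \circ T\rho$ and $\Delta^\zeta = \lambda^\zeta \circ \Delta \circ \rho$, so evaluating on $\xi$ at $p$ gives
\begin{equation*}
    S^\zeta(\xi)_p = \lambda^\zeta\big(S(\rho_* \xi_p)\big), \qquad \Delta^\zeta_p = \lambda^\zeta\big(\Delta_{\rho(p)}\big),
\end{equation*}
where I write $\rho_* \xi_p = T\rho(\xi_p)$ and $\Delta_{\rho(p)} = (\Delta \circ \rho)(p)$, both elements of $T_{\rho(p)}(TQ)$.

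The key observation is that $\lambda^\zeta$ is injective: being a section of $T\rho$ it satisfies $T\rho \circ \lambda^\zeta = \Id$, so $T\rho$ serves as a left inverse. Consequently, at each point $p$ the equation $S^\zeta(\xi)_p = \Delta^\zeta_p$ holds if and only if $\lambda^\zeta\big(S(\rho_* \xi_p)\big) = \lambda^\zeta\big(\Delta_{\rho(p)}\big)$, which in turn holds if and only if $S(\rho_* \xi_p) = \Delta_{\rho(p)}$. Since this last condition involves only the canonical $S$ and $\Delta$ on $TQ$ together with the projection $\rho$, it makes \emph{no} reference to the action function $\zeta$.

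To close the argument I would then treat the two directions. If $\xi$ is an extended SODE, then $S^\zeta(\xi) = \Delta^\zeta$ for some action function $\zeta$, and the pointwise equivalence above yields $S(\rho_* \xi_p) = \Delta_{\rho(p)}$ for all $p$. Conversely, if this intrinsic identity holds, then for \emph{any} action function $\zeta$ the same equivalence gives $S^\zeta(\xi) = \Delta^\zeta$, so in particular $\xi$ is an extended SODE. As a byproduct this shows that the extended SODE property does not depend on the chosen $\zeta$.

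The argument is essentially a direct unpacking of the definitions, so I do not expect a serious obstacle; the only point demanding care is the injectivity of $\lambda^\zeta$ on each fibre, which follows from the direct-sum decomposition $TP = \ker T\rho \oplus \ker T\zeta$ (this makes $T\rho$ restrict to an isomorphism from $\ker T\zeta$ onto $T(TQ)$, whose inverse is exactly $\lambda^\zeta$). As a consistency check one may recompute both sides in coordinates $(q^i, \dot q^i, z)$: the identity $S(\rho_* \xi_p) = \Delta_{\rho(p)}$ forces the $\partial/\partial q^i$-component of $\xi$ to equal $\dot q^i$, which is precisely the local form of an extended SODE recorded above.
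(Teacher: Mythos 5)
Your proof is correct and follows essentially the same route as the paper's: both unwind the definitions $S^\zeta = \lambda^\zeta \circ S \circ T\rho$ and $\Delta^\zeta = \lambda^\zeta \circ \Delta \circ \rho$ and then cancel $\lambda^\zeta$ using the fact that it is a section of $T\rho$ (hence injective). Your write-up merely makes explicit what the paper compresses into one displayed chain of equivalences, including the pointwise injectivity argument and the resulting $\zeta$-independence of the extended SODE notion, which the paper states as a remark immediately after the lemma.
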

\begin{proof}
Since $\lambda$ is a section of $T\rho$:
$$S^\zeta (\xi) = \Delta^\zeta\Leftrightarrow\lambda(S(T\rho(\xi)))=\lambda(\Delta\circ\rho)\Leftrightarrow S(T\rho(\xi))=\Delta\circ\rho\,.$$
\end{proof}
Thus, the concept of extended SODE is independent on the choice of action function.

\subsection{Extended contact Lagrangian systems}
In this section we will define a generalization of contact Lagrangian systems, which will provide us a more general formulation of the problem of equivalent Lagrangians.

\begin{definition} An \emph{extended Lagrangian system} on an extended tangent bundle $\rho:P \to TQ$ consists of a Lagrangian function $L:P \to \mathbb{R}$ and an {action function} $\zeta:P \to \RR$.
\end{definition}

From an extended Lagrangian system, we can obtain the contact form
\begin{equation}
    \eta^{\zeta}_L = \dd \zeta - {(S^\zeta)}^* \dd L = 
    \dd \zeta -{ \left({\frac{\partial L}{\partial \dot{q}^i}}\right)}_{\zeta} \dd q^i.
\end{equation}

This form will be a contact form if and only if $L$ is $\zeta$-regular, that is, the matrix $( W_{ij}^\zeta)$ defined by
\begin{equation}
    W_{ij}^\zeta = {\left({\frac{\partial^2 L}{\partial \dot{q}^i \partial \dot{q}^j}}\right)}_{\zeta}
\end{equation}
is not degenerate. The regularity depends on the Lagrangian and on the action function. For instance, the Lagrangian $L=\frac12v^2-\gamma z$ is regular by the action function $\zeta=z$, but it is not regular by the action function $\zeta=\frac12v^2-\gamma z$.

The pair $(P,\eta^{\zeta}_L)$ is a contact manifold.
We denote its Reeb vector field by
\begin{equation}
    \mathcal{R}^\zeta =     \frac{\partial}{\partial \zeta}+{(W^\zeta)}^{ij}{\left({\frac{\partial^2 L}{\partial q^i \partial \zeta}}\right)}_{\zeta} \left(\frac{\partial}{\partial \dot{q}^j}\right)_\zeta
\end{equation}

Given the $\zeta$-Liouvile vector field
\begin{equation}
    \Delta^\zeta = \dot{q}^i {\left({\frac{\partial }{\partial \dot{q}^i}}\right)}_{\zeta},
\end{equation}
we define the $\zeta$-energy
\begin{equation}
    E^{\zeta}_L = \Delta^{\zeta}(L) - L\,.
\end{equation}
$(P,\eta_{L,\zeta},E_{L,\zeta})$ is a Hamiltonian contact system. The corresponding Hamiltonian vector field,  ($\zeta$-Herglotz vector field $\xi_{L,\zeta}$) is,
\begin{subequations}
    \begin{align}
        \contr{\xi_{L,\zeta}} \eta_{L,\zeta}  &= -E^{\zeta}_L,\\
        \liedv{\xi_{L,\zeta}} \eta_{L,\zeta}  &= -\Reeb_L(E^{\zeta}_L) \eta_{L,\zeta} = \frac{\partial L}{\partial \zeta} \eta_{L,\zeta}.
    \end{align}
\end{subequations}

We remark that if $\zeta = z$, this is just the usual Herglotz vector field $\xi_L$.

A variational principle may also be written for an extended Lagrangian system $(L,\zeta)$. Given the space $\Omega(q_0,q_1)$ of paths $\gamma:[0,1] \to Q$ such that $\gamma(0)=q_0$ and $\gamma(1)=q_1$, and given $\zeta_0 \in \mathbb{R}$, we define
\begin{equation}
    \mathcal{X}_{L,\zeta,\zeta_0}: \Omega(q_0,q_1) \to \mathcal{C}^\infty([0,1] \to P)
\end{equation}
such that for each curve $\gamma \in \Omega(q_0, q_1)$, $\mathcal{X}_{L,\zeta,\zeta_0}(\gamma)$ is the curve that satisfies $\rho \circ \mathcal{X}_{L,\zeta,\zeta_0}(\gamma) = \gamma'$, and its $\zeta$ component, which we denote by $\mathcal{Z}_{L,\zeta,\zeta_0}(\gamma) = \zeta \circ \mathcal{X}_{L,\zeta,\zeta_0}(\gamma)$ solves the initial value problem
\begin{equation}
    \begin{cases}
        \frac{\dd \mathcal{Z}^\zeta(\gamma)}{\dd t} &= L \circ \mathcal{X}_{L,\zeta,\zeta_0}(\gamma),\\
        \mathcal{Z}^\zeta(\gamma)(0) &= \zeta_0.
    \end{cases}
\end{equation}

We now define the action
\begin{equation}
    \begin{split}
        \mathcal{A}^{\zeta}: \Omega(q_0,q_1) &\to \mathbb{R},\\
        \gamma &\to \mathcal{Z}^\zeta(\gamma)(1) - \zeta_0 = \int_0^1 L(q^i(t), \dot{q}^i(t), \zeta(t)) \dd t.
    \end{split}
\end{equation}
By repeating the usual computation, but this time with the coordinates $(q^i, \dot{q}^i,\zeta)$, we obtain
\begin{theorem}\label{thm:ext_var_principle}
    $\gamma\in \Omega(q_0,q_1)$ is a critical point of $\mathcal{A}$ if and only if $(\gamma, \dot{\gamma}, \mathcal{Z}^{\zeta}(\gamma))$ are solutions to the $\zeta$-Herglotz equations:
    \begin{subequations}
        \begin{align}
            \left(\frac{\partial L}{\partial q^i}\right)_\zeta - \frac{\dd }{\dd t} \left(\frac{\partial L}{\partial \dot{q}^i} \right)_\zeta &= 
            \left(\frac{\partial L}{\partial \dot{q}^i}\right)_\zeta \frac{\partial L}{\partial \zeta} \\
            \dot{\zeta} &= L.     
        \end{align}
    \end{subequations}
\end{theorem}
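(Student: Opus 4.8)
The plan is to run the classical Herglotz first-variation computation, but carried out in the chart $(q^i,\dot q^i,\zeta)$ adapted to the action function $\zeta$, so that the coordinate derivatives are exactly the $\zeta$-framed ones $\left(\frac{\partial}{\partial q^i}\right)_\zeta$, $\left(\frac{\partial}{\partial\dot q^i}\right)_\zeta$ of \eqref{eq:cfractions}. The reason this is legitimate is that $(\rho,\zeta)\colon P\to TQ\times\RR$ is a diffeomorphism: through it the extended system $(L,\zeta)$ is carried to an ordinary contact Lagrangian system on $TQ\times\RR$ whose action coordinate is the second factor, the adapted frame becomes the standard coordinate frame, and $\mathcal{A}^\zeta$ becomes the standard Herglotz action of Section~\ref{sec:prelim}. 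Thus it suffices to repeat that derivation with $z$ replaced throughout by $\zeta$.

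Concretely, I would fix a variation $\gamma_s$ of $\gamma=\gamma_0$ inside $\Omega(q_0,q_1)$, with $\delta q^i:=\partial_s\gamma_s^i|_{s=0}$ vanishing at $t=0,1$, and set $u(t):=\partial_s \mathcal{Z}^\zeta(\gamma_s)(t)|_{s=0}$. Because the curve $\mathcal{X}_{L,\zeta,\zeta_0}(\gamma_s)$ has $\zeta$-chart coordinates $(\gamma_s,\dot\gamma_s,\mathcal{Z}^\zeta(\gamma_s))$, differentiating the defining problem $\dot{\mathcal{Z}}^\zeta(\gamma_s)=L\circ\mathcal{X}$, $\mathcal{Z}^\zeta(\gamma_s)(0)=\zeta_0$, with respect to $s$ gives the linear equation
\begin{equation*}
  \dot u = \left(\frac{\partial L}{\partial q^i}\right)_\zeta\delta q^i + \left(\frac{\partial L}{\partial\dot q^i}\right)_\zeta\frac{\dd\,\delta q^i}{\dd t} + \frac{\partial L}{\partial\zeta}\,u,\qquad u(0)=0 .
\end{equation*}
The partials appearing are automatically the $\zeta$-adapted ones, precisely because $\zeta$ is the coordinate being differentiated. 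Since $\mathcal{A}^\zeta(\gamma_s)=\mathcal{Z}^\zeta(\gamma_s)(1)-\zeta_0$, the first variation is $\delta\mathcal{A}^\zeta=u(1)$, so $\gamma$ is a critical point exactly when $u(1)=0$ for every admissible $\delta q$.

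Next I would integrate this linear equation with the factor $\mu(t)=\exp\!\big(-\int_0^t\frac{\partial L}{\partial\zeta}\,\dd s\big)$, for which $\dot\mu=-\frac{\partial L}{\partial\zeta}\mu$ and $\frac{\dd}{\dd t}(\mu u)$ equals $\mu$ times the $\delta q$-dependent source. Integrating over $[0,1]$, using $u(0)=0$, and integrating the $\dd\,\delta q^i/\dd t$ term by parts, the boundary contributions vanish since $\delta q^i(0)=\delta q^i(1)=0$; collecting terms yields
\begin{equation*}
  \mu(1)\,u(1)=\int_0^1 \mu\left[\left(\frac{\partial L}{\partial q^i}\right)_\zeta-\frac{\dd}{\dd t}\left(\frac{\partial L}{\partial\dot q^i}\right)_\zeta-\left(\frac{\partial L}{\partial\dot q^i}\right)_\zeta\frac{\partial L}{\partial\zeta}\right]\delta q^i\,\dd t .
\end{equation*}
As $\mu>0$ and $\delta q^i$ ranges over all smooth variations vanishing at the endpoints, the fundamental lemma of the calculus of variations forces the bracket to vanish, which is the first $\zeta$-Herglotz equation; the second equation $\dot\zeta=L$ is simply the constraint defining $\mathcal{Z}^\zeta(\gamma)$.

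The main difficulty is not analytic but notational and conceptual: one must verify that differentiating $L\circ\mathcal X$ along the variation produces the $\zeta$-adapted derivatives $\left(\frac{\partial}{\partial q^i}\right)_\zeta$ rather than the $z$-frame ones, which is exactly what the relations \eqref{eq:cfractions} and the diffeomorphism $(\rho,\zeta)$ guarantee. Beyond that, the only points requiring care are that the class of variations be rich enough for the fundamental lemma, and that no boundary term survives at $t=1$ even though the terminal value of $\zeta$ is unconstrained — this holds because $\delta q^i(1)=0$.
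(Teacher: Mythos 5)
Your overall route is exactly the paper's: its proof consists of the single remark that one repeats the usual Herglotz first-variation computation in the coordinates $(q^i,\dot q^i,\zeta)$, and your write-up supplies precisely those details (linearized ODE for $u=\delta\mathcal{Z}^\zeta$, integrating factor, integration by parts, fundamental lemma), together with the correct justification, via the diffeomorphism $(\rho,\zeta)$, that the derivatives appearing are the $\zeta$-adapted ones of \eqref{eq:cfractions}.

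However, your final display contains a sign error, and it does not follow from your own steps. Since $\dot\mu=-\frac{\partial L}{\partial\zeta}\mu$, one has
\begin{equation*}
    \frac{\dd}{\dd t}\left(\mu\left(\frac{\partial L}{\partial\dot q^i}\right)_\zeta\right)=\mu\left(\frac{\dd}{\dd t}\left(\frac{\partial L}{\partial\dot q^i}\right)_\zeta-\frac{\partial L}{\partial\zeta}\left(\frac{\partial L}{\partial\dot q^i}\right)_\zeta\right),
\end{equation*}
so integrating by parts actually gives
\begin{equation*}
    \mu(1)\,u(1)=\int_0^1\mu\left[\left(\frac{\partial L}{\partial q^i}\right)_\zeta-\frac{\dd}{\dd t}\left(\frac{\partial L}{\partial\dot q^i}\right)_\zeta+\left(\frac{\partial L}{\partial\dot q^i}\right)_\zeta\frac{\partial L}{\partial\zeta}\right]\delta q^i\,\dd t,
\end{equation*}
with a plus sign on the last term, not the minus you wrote. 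Hence the Euler--Lagrange condition your argument produces is
\begin{equation*}
    \left(\frac{\partial L}{\partial q^i}\right)_\zeta-\frac{\dd}{\dd t}\left(\frac{\partial L}{\partial\dot q^i}\right)_\zeta=-\left(\frac{\partial L}{\partial\dot q^i}\right)_\zeta\frac{\partial L}{\partial\zeta},
\end{equation*}
opposite in sign to the equation displayed in the theorem (and to your bracket). This is not a defect of the method but a sign typo in the paper's displayed Herglotz equations: the same plus-in-bracket form follows from the paper's Hamiltonian conventions (in the proof of Theorem~\ref{thm:inv_problem_contact_distribution} one derives $\xi\left(\frac{\partial b}{\partial\dot q^i}\right)-\frac{\partial b}{\partial q^i}=\frac{\partial b}{\partial z}\frac{\partial b}{\partial\dot q^i}$, which is exactly this form), and from the paper's own example $L=\frac12\dot q^2-\gamma z$, whose Herglotz vector field has acceleration $-\gamma\dot q$ (damping), incompatible with the equations as printed in the statement. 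So you should correct the sign in your bracket and note that you are proving the sign-corrected statement; as written, your bracket is reached only by an algebra slip that happens to reproduce the paper's typo.
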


\subsection{Extended contact Hamiltonian systems}
Now we define the Hamiltonian counterpart an extended Lagrangian system. 

An extended cotangent bundle is a line bundle $\tilde{\rho}:\tilde{P} \to T^*Q$.  A \emph{Hamiltonian action function} $\tilde{\zeta}: {P} \to \mathbb{R}$ is a surjective map such that $T \tilde{P} = \ker T \tilde{\rho} \oplus T \tilde{\zeta}$.

$$
\xymatrix{
T{{\tilde{P}}} \ar[rr]^{\tau_{{\tilde{P}}}} \ar[dd]_{T{\tilde{\rho}}} \ &  \ & {{\tilde{P}}} \ar[rr]^{\tilde{\zeta}} \ar[dd]_{{\tilde{\rho}}} \ar[ddrr]^{{\tilde{\rho}}_0}\ &  \   &\mathbb{R} 
\\
\ & \ & \ & \ 
\\
 T(T^* Q) \ar@/_1.5pc/[uu]_{\tilde{\lambda}}\ar[rr]^{\tau_{T^* Q}}\ & \ & T^* Q \ar[rr]^{\tau_Q}\ & \ &Q  
}
$$

\begin{equation}
    \eta_{Q,\tilde{\zeta}} = \dd \tilde{\zeta} - \tilde{\rho}\theta_Q = \dd \tilde{\zeta} - p_i \dd q^i.
\end{equation}

Now, given a Hamiltonian function $H: \tilde{P} \to \mathbb{R}$ we compute its Hamiltonian vector field $X_{H,\tilde{\zeta}}$ with respect to $\eta_{Q,\tilde{\zeta}}$, which is given in coordinates by
\begin{equation}
    X_H = \frac{\partial H}{\partial q^i} \left( \frac{\partial }{\partial q^i} \right)_{\tilde{\zeta}} - \left( \frac{\partial H}{\partial p_i} + p_i \frac{\partial H}{\partial z} \right) \left(\frac{\partial }{\partial p_i} \right)_{\tilde{\zeta}} + \left(p_i \frac{\partial H}{\partial p_i} - H \right) \frac{\partial }{\partial z}
\end{equation}

\subsubsection{The $\zeta$-Legendre transformation}
Given an extended Lagrangian system $(L,\zeta)$ on the extended tangent bundle $\rho:P \to TQ$, we can define its $\zeta$-Legendre transformation, which maps it to an extended Hamiltonian system. 

First, we need to construct the dual Hamiltonian bundle $\tilde{P}$ of $P$. Indeed, given the action function $\zeta$, we will be able to construct $\tilde{\rho}: \tilde{P} \to T^* Q$ and the action function $\zeta$.

We let $\rho_0 = \rho \circ \tau_Q : P \to Q$. Note that the bundle $(\rho_0, \zeta): P \to Q \times \mathbb{R} \to Q \times \mathbb{R}$ has a unique structure of a vector bundle such that the map $(\rho,\zeta): P \to TQ \times \mathbb{R}$ is a vector bundle isomorphism. We can then construct the linear dual bundle $(\tilde{\rho}, \tilde{\zeta}): \tilde{P} \to Q \times \mathbb{R}$. Now we define the map $\tilde{\rho}:\tilde{P} \to T^*Q$, such that, for any $\alpha_{q_0, \zeta_0} \in \tilde{P}$ and $v_{q_0} \in TQ$, we have that
\begin{equation}
    \tilde{\rho}(\alpha_{q_0, \zeta_0})(v_{q_0}) = \alpha_{q_0, \zeta_0}(v_{q_0,\zeta_0}),
\end{equation}
where $v_{q_0,\zeta_0} \in {P}$ is the unique element satisfying ${\rho}(v_{q_0,\zeta_0}) = v_{q_0}$ and ${\zeta}(v_{q_0,\zeta_0}) = \zeta_0$.
\begin{equation}
    \begin{tikzcd}
        P \arrow[r, "\zeta"] \arrow[rd, "\rho_0"] \arrow[d, "\rho"] & \mathbb{R} & \tilde{P} \arrow[l, "\tilde{\zeta}"'] \arrow[d, "\tilde\rho"] \arrow[ld, "\tilde{\rho}_0"] \\
        TQ \arrow[r, "\tau_Q"]                                      & Q          & T^*Q \arrow[l, "\pi_Q"]                                                                   
        \end{tikzcd}
\end{equation}

Now, we can define the $\zeta$-Legendre transform of $L$ as the fiber derivative $F^\zeta L: P \to \tilde{P}$ over the bundle $(\tilde{\rho}, \tilde{\zeta}): \tilde{P} \to Q \times \mathbb{R}$. That is
\begin{equation}
    F^\zeta L(v_{q,\zeta})(w_{q,\zeta}) =\frac{\dd }{\dd t}\vert_{t=0} L (v_{q,\zeta} + t  w_{q,\zeta}),
\end{equation}
where $v_{q,\zeta},w_{q,\zeta} \in (\tau,\zeta)^{-1}(q,\zeta)$. In local coordinates, 
\begin{equation}
    \tilde{\rho}(F^\zeta L (q,\dot{q}, \zeta)) = (q, {\left(\frac{\partial L}{\partial \dot{q}}\right)}_{\zeta})
\end{equation}
where we used coordinates $(q^i, \dot{q}^i, \zeta)$ on the right and the dual coordinates $(q^i, p^\zeta_i, \zeta)$ on the left. The map $F^\zeta L$ is a local diffeomorphism if and only if $L$ is $\zeta$-regular. If there exist $H$ is such that $F^\zeta L_* H = E_{L,\zeta}$, then $ (F^\zeta L)^* \eta_{Q,\zeta^*} = \eta_{L,\zeta}$. Hence, $F^\zeta L$ is a strict similarity for the contact systems $(TQ \times \mathbb{R},E_L^{\zeta}, \eta^{\zeta}_L)$ and $(T^*Q \times \mathbb{R}, H, \eta^{\zeta^*}_Q)$.
\section{Equivalent extended contact systems}\label{equiv.extended}

A smooth change in the $z$ variable corresponds to a change in the action function, which are realized by vector bundle automorphisms on $P$.

\begin{definition}
A \emph{horizontal diffeomorphism} is a vector bundle automorphism $\phi$ of $\rho:P \to TQ$.
\end{definition}

Given two action functions $z,\zeta$, there exists a unique horizontal diffeomorphism that satisfies the commutative  diagram
\begin{equation}\label{diagramahorizontal}
    \begin{tikzcd}
& \mathbb{R} &\\
P \arrow[rr, "\phi"] \arrow[ru, "z"] \arrow[rd, "\rho"'] &            & P \arrow[lu, "\zeta"'] \arrow[ld, "\rho"] \\
                                & {TQ} &
\end{tikzcd}
\end{equation}
and it is given by $\phi = (\Id_{TQ}, \zeta)$ on the trivialization provided by $(\rho,z)$.

In the case that $\zeta$ does not depend on the velocities, that is $\zeta = \tau^* \zeta_0$, where $ \zeta_0 : Q \times \mathbb{R}\to \mathbb{R}$ we say that $\phi$ is a \emph{strong horizontal diffeomorphism}. 

Subsequently, we analyze how the horizontal transformation act on the structures of the extended tangent bundle and how they can be used to study equivalent Lagrangians.

\subsection{Equivalence in extended tangent bundles}

A horizontal diffeomorphism $\phi$ acts on action function  with precomposition $z=\phi^*\zeta=\zeta\circ\phi$, as we can see in diagram \ref{diagramahorizontal}. The corresponding extended almost tangent structures and Liouville vector fields are not preserved by $\phi$, in general. More precisely, we have the following result.

\begin{lemma}
   If $\phi$ is a strong horizontal diffeomorphism, then
   \begin{equation}
    \phi_* S^{\phi^* \zeta} = S^{\zeta}, \quad 
    \phi_* \Delta^{\phi^* \zeta} = \Delta^{\zeta}.
\end{equation}
\end{lemma}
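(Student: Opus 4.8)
The plan is to reduce both equalities to a single transfer identity between the lifting sections, namely $T\phi\circ\lambda^{\phi^*\zeta}=\lambda^{\zeta}$, and then read the claims off directly from the coordinate-free definitions $S^\zeta=\lambda^\zeta\circ S\circ T\rho$ and $\Delta^\zeta=\lambda^\zeta\circ\Delta\circ\rho$. Writing $z:=\phi^*\zeta=\zeta\circ\phi$ for the source action function, so that $S^{\phi^*\zeta}=S^{z}$ and $\Delta^{\phi^*\zeta}=\Delta^{z}$, the two statements to prove become $\phi_*S^{z}=S^\zeta$ and $\phi_*\Delta^{z}=\Delta^\zeta$.

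First I would record the only two structural facts I need. Since $\phi$ covers the identity, $\rho\circ\phi=\rho$ gives $T\rho\circ T\phi=T\rho$ and, applied to $\phi^{-1}$, also $T\rho\circ T\phi^{-1}=T\rho$; the relation $\zeta\circ\phi=z$ gives $T\zeta\circ T\phi=Tz$. Because these are the only inputs, the argument is essentially insensitive to whether $\phi$ is strong; the strong hypothesis merely makes everything transparent, since then $\zeta$ is independent of the velocities, both $S^\zeta=S^{z}$ and $\Delta^\zeta=\Delta^{z}$ reduce to the plain $S$ and $\Delta$, and such a $\phi$ fixes each $\partial/\partial\dot{q}^i$, so both equalities hold at a glance.

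Next I would establish the transfer identity $T\phi_p\circ\lambda^{z}_p=\lambda^\zeta_{\phi(p)}$ for every $p\in P$, both sides being maps $T_{\rho(p)}(TQ)\to T_{\phi(p)}P$ (the domains match because $\rho(\phi(p))=\rho(p)$). Here the uniqueness clause defining $\lambda^\zeta$ does the work: for $y\in T_{\rho(p)}(TQ)$ the vector $T\phi_p(\lambda^{z}_p(y))$ satisfies $T\rho\bigl(T\phi_p(\lambda^{z}_p(y))\bigr)=T\rho(\lambda^{z}_p(y))=y$ and $T\zeta\bigl(T\phi_p(\lambda^{z}_p(y))\bigr)=Tz(\lambda^{z}_p(y))=0$, which are precisely the two conditions characterizing $\lambda^\zeta_{\phi(p)}(y)$, so uniqueness forces equality.

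Finally I would assemble the pieces. For the tensor, push the defining composition forward, $\phi_*S^{z}=T\phi\circ\lambda^{z}\circ S\circ T\rho\circ T\phi^{-1}$; substituting $T\rho\circ T\phi^{-1}=T\rho$ collapses the last factor, the transfer identity replaces $T\phi\circ\lambda^{z}$ by $\lambda^\zeta$, and since $S$ is the intrinsic vertical endomorphism of $TQ$ at $\rho(p)=\rho(\phi(p))$, what remains is exactly $\lambda^\zeta\circ S\circ T\rho=S^\zeta$. For the Liouville field it is shorter still: $\phi_*\Delta^{z}=T\phi\circ\lambda^{z}\circ(\Delta\circ\rho)=\lambda^\zeta\circ(\Delta\circ\rho)=\Delta^\zeta$, using $\rho\circ\phi=\rho$ so that the base point of $\Delta$ is unchanged. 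The one place I expect bookkeeping to bite is keeping track of base points, as $\lambda^{z}$, $\lambda^\zeta$ and $T\rho$ are maps over $\rho$ rather than honest maps of $P$; a parallel coordinate check with $\phi(q,\dot{q},z)=(q,\dot{q},\psi)$ determined by $\zeta(q,\dot{q},\psi)=z$, using $\psi_{\dot{q}^i}=-(\partial\zeta/\partial\dot{q}^i)/(\partial\zeta/\partial z)$, confirms the result and can serve as a safety net.
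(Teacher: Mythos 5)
Your proof is correct, and there is nothing in the paper to compare it against: the lemma is stated there without proof, so your argument fills a gap rather than duplicating one. The transfer identity $T\phi_p\circ\lambda^{\phi^*\zeta}_p=\lambda^{\zeta}_{\phi(p)}$ is exactly the right intermediate statement; your verification of it is complete (both defining conditions of the splitting are checked, and uniqueness follows from $\ker T\rho\cap\ker T\zeta=0$), and the base-point bookkeeping you were worried about is handled correctly, since $\rho\circ\phi=\rho$ identifies the relevant fibres and gives $T\rho\circ T\phi^{-1}=T\rho$.

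The point worth dwelling on is the one you noticed yourself: your argument never uses strongness, and indeed the identities as literally written, $\phi_*S^{\phi^*\zeta}=S^\zeta$ and $\phi_*\Delta^{\phi^*\zeta}=\Delta^\zeta$, hold for \emph{every} horizontal diffeomorphism; your coordinate safety net confirms this (take $\zeta=z+\dot q$, so $\phi(q,\dot q,z)=(q,\dot q,z-\dot q)$, and both sides come out to $\dd q\otimes(\partial/\partial\dot q-\partial/\partial z)$). This does not make your proof wrong; it exposes that the hypothesis is superfluous for the lemma's literal content. What the authors apparently intend --- given the sentence immediately preceding the lemma, that these structures are \enquote{not preserved by $\phi$, in general} --- is the statement in which both sides refer to the \emph{same} action function, namely $\phi_*S^{z}=S^{z}$ and $\phi_*\Delta^{z}=\Delta^{z}$. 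That version genuinely requires $\phi$ strong: one has $S^{\zeta}=S^{z}$ precisely when $\zeta$ is independent of the velocities, which is exactly your closing observation that under strongness $S^\zeta=S^z$, $\Delta^\zeta=\Delta^z$ and $T\phi$ fixes each $\partial/\partial\dot q^i$, collapsing the literal statement onto the intended one. If you write this up, it would be worth adding that one sentence of reconciliation, since otherwise your (correct) remark that strongness is unnecessary appears to contradict the paper's framing.
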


Now we will study the action of horizontal diffeomorphisms on extended SODEs. First we must see that is well-behaved. It turns out that preserving extended SODEs actually characterizes horizontal diffeomorphisms.
\begin{proposition}
    A vector bundle automorphism $\phi$ of $\rho_0: P \to Q$  preserves extended SODEs (that is $\phi_* \xi$ is a SODE whenever $\xi$ is a SODE) if and only if it is a horizontal diffeomorphism.
\end{proposition}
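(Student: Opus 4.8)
The plan is to reduce the whole question to the base $TQ$ by means of Lemma~\ref{lem:sode}, and then to exploit the rigidity of second order equations under diffeomorphisms of a tangent bundle. Since $\phi$ is a vector bundle automorphism of $\rho_0:P\to Q$, it covers some diffeomorphism $\phi_0:Q\to Q$ and is fibrewise linear; fixing any action function to obtain coordinates $(q^i,\dot q^i,z)$, this means $\phi$ has the form $\phi(q,\dot q,z)=(\phi_0(q),\,A^i_j(q)\dot q^j+B^i(q)z,\,C_j(q)\dot q^j+D(q)z)$ with invertible fibre matrix. By Lemma~\ref{lem:sode}, a vector field $\xi$ is an extended SODE if and only if $S(T\rho\,\xi_p)=\Delta_{\rho(p)}$ for every $p$; that is, the SODE condition depends only on the $\rho$-projection of $\xi$ and, in coordinates, amounts exactly to the coefficient of $\partial/\partial q^i$ being $\dot q^i$.

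First I would treat the forward direction. I would compute the $\partial/\partial q^i$--coefficient of $\phi_*\xi$ at the image point for an arbitrary SODE $\xi$, namely $(\phi_*\xi)(q^i)\circ\phi=\xi(\phi_0^i(q))=\dot q^j\,\partial\phi_0^i/\partial q^j$, and compare it with the velocity coordinate $A^i_j(q)\dot q^j+B^i(q)z$ of the image. Because this identity must hold identically in $(q,\dot q,z)$, matching coefficients forces $B^i=0$ and $A^i_j=\partial\phi_0^i/\partial q^j$. These two facts are precisely the statement that $\rho\circ\phi=T\phi_0\circ\rho$: the velocity components of $\phi$ are independent of the action variable and transform by the Jacobian of $\phi_0$, so $\phi$ descends through $\rho$ to the natural (tangent) lift $T\phi_0$ on $TQ$. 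This is exactly the assertion that $\phi$ is a vector bundle automorphism of $\rho:P\to TQ$, i.e.\ a horizontal diffeomorphism; the action coordinate is then left free to transform by the fibre map $z\mapsto C_j(q)\dot q^j+D(q)z$, which plays no role in the second order condition.

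For the converse I would reverse the argument. A horizontal diffeomorphism respects $\rho$ and induces the lift $T\phi_0$ on $TQ$, which preserves the canonical tensor $S$ and the Liouville field $\Delta$; since, by Lemma~\ref{lem:sode}, being an extended SODE only involves $T\rho\,\xi$, the pushforward $\phi_*\xi$ of a SODE is again a SODE. I expect the main obstacle to be the forward implication, and in particular the extraction of \emph{both} conclusions from the single scalar identity above: one must use that the vertical coefficients $a^i$ and the action coefficient $b$ of $\xi$ are completely free, so that matching the horizontal coefficient of $\phi_*\xi$ to the velocity coordinate for \emph{every} SODE simultaneously forces the velocity part of $\phi$ to be independent of $z$ and to coincide with the tangent lift of $\phi_0$. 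This is the contact analogue of the classical rigidity statement that the only diffeomorphisms of $TQ$ mapping every SODE to a SODE are the complete lifts of diffeomorphisms of $Q$, and it is what pins down the horizontal (tangent-lift) structure rather than mere fibre-preservation over $TQ$.
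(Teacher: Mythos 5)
Your proof is correct, and its core is the same as the paper's: reduce to the base via Lemma~\ref{lem:sode} and match the $\partial/\partial q^i$-coefficient of $\phi_*\xi$ against the velocity coordinate of the image point, which pins down the velocity part of $\phi$ and shows it descends through $\rho$. The differences are worth noting. The paper's proof takes $\phi$ to cover the identity of $Q$ and writes $\phi(q^i,\dot q^i,z)=(q^i,\nu^i,\zeta)$ with $\nu^i,\zeta$ \emph{arbitrary} smooth functions, using no linearity at all; your fibrewise-linear ansatz $\bigl(\phi_0(q),A^i_j(q)\dot q^j+B^i(q)z,C_j(q)\dot q^j+D(q)z\bigr)$ is the literal reading of \enquote{vector bundle automorphism}, but it is strictly narrower than what the paper intends, since the paper's own horizontal diffeomorphisms $(\Id_{TQ},\zeta)$ are generally \emph{not} linear in $z$ (an action function need not be); fortunately your coefficient-matching argument goes through verbatim for arbitrary $\nu^i(q,\dot q,z)$, so this costs nothing. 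Conversely, you are more general in allowing a nontrivial base map $\phi_0$, in which case SODE-preservation forces the velocity part to equal $T\phi_0$; calling the resulting map a horizontal diffeomorphism then requires reading \enquote{automorphism of $\rho:P\to TQ$} as permitting tangent lifts as base maps, whereas the paper's proof (and diagram~\eqref{diagramahorizontal}) treats horizontal diffeomorphisms as covering $\Id_{TQ}$, which is consistent there because its hypothesis class also covers $\Id_Q$. One small inaccuracy in your commentary: you do not actually need the freedom of the coefficients $a^i$ and $b$ over all SODEs; for a single SODE the identity $\dot q^j\,\partial\phi_0^i/\partial q^j=A^i_j(q)\dot q^j+B^i(q)z$ already holds at every point of $P$, and matching in $(\dot q,z)$ alone yields both conclusions, exactly as in the paper.
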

\begin{proof}
    Let, $\xi$ be a SODE and let $\phi(q^i,\dot{q}^i,z) = (q^i,\nu^i, \zeta)$. Then, using the characterization given by \cref{lem:sode},
    \begin{equation}
        S(T\rho(\phi_* \xi)) = \nu^j \frac{\partial }{\partial \dot{q}^i}\,.
    \end{equation}
    $\phi$ is a horizontal diffeomorphism if and only if $\nu^i = \dot{q}^i$. Clearly, this is the case if and only if $\phi_* \xi$ is a SODE.


\end{proof}

Since horizontal diffeomorphisms preserve SODEs, we can classify SODEs by these transformations.
\begin{definition}
    We say that two extended SODEs $\xi$ and $\bar{\xi}$ on $P$ are \emph{horizontally similar} if there exists a horizontal diffeomorphism $\phi$ such that $\phi_* \xi = \bar{\xi}$. If $\phi$ is a strong horizontal diffeomorphism, then we say that $\xi$ and $\bar{\xi}$ are \emph{strongly horizontally similar}.
\end{definition}

A direct computation shows that two SODEs, $\xi$ and $\bar{\xi}$, are horizontally similar if, and only if, there exists a function $\phi = (\Id_{TQ},  \zeta)$ that satisfies
\begin{subequations}\label{eq:equivalence_conditions}
    \begin{gather}
        a^i = \phi^* \bar{a}^i,\label{eq:equivalence_conditions1}\\
        \frac{\partial \zeta}{\partial q^i} \dot{q}^i + a^i \frac{\partial \zeta}{\partial \dot{q}^i} + b\frac{\partial \zeta}{\partial z} = \phi^* \bar{b}\label{eq:equivalence_conditions2}, \quad
        \frac{\partial \zeta}{\partial z} \neq 0,
    \end{gather}
\end{subequations}
where
\begin{subequations}\label{eq:xi_barxi_coords}
    \begin{align}\label{eq:SODE_coords}
        \xi &= \dot{q}^i \frac{\partial}{\partial q} + a^i \frac{\partial}{\partial \dot{q}} + b \frac{\partial}{\partial z},\\
        \bar{\xi} &= \dot{q}^i \frac{\partial}{\partial q} + \bar{a}^i \frac{\partial}{\partial \dot{q}} + \bar{b} \frac{\partial}{\partial z}.
    \end{align}
\end{subequations}


An interesting particular case is when a SODE $\xi$ in $P$ is \emph{$\rho$-projectable}. In the coordinates~\eqref{eq:xi_barxi_coords}, this means that $a^i$ does not depend on $z$. Horizontal equivalences preserve this property.

\begin{proposition}\label{prop:projectable}
    Let $\xi,\bar{\xi}$ be extended SODEs on $P$ and let $\xi$ be $\rho$-projectable. Then  $\xi,\bar{\xi}$ are horizontally equivalent if and only if $\bar{\xi}$ is also $\rho$-projectable and  ${\rho}_* \xi = {\rho}_* \bar{\xi}$.
\end{proposition}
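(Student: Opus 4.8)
The plan is to reduce everything to the coordinate characterisation of horizontal similarity recorded in \eqref{eq:equivalence_conditions}, which says that $\xi$ and $\bar\xi$ are horizontally similar exactly when there is an action function $\zeta$, defining the horizontal diffeomorphism $\phi=(\Id_{TQ},\zeta)$ with $\frac{\partial\zeta}{\partial z}\neq0$, such that $a^i=\phi^*\bar a^i$ (condition \eqref{eq:equivalence_conditions1}) and $\xi(\zeta)=\phi^*\bar b$ (condition \eqref{eq:equivalence_conditions2}). I would first recall that $\rho$-projectability of a SODE means precisely that the coefficients $a^i$ in \eqref{eq:SODE_coords} are independent of $z$, so that $\rho_*\xi=\dot q^i\frac{\partial}{\partial q^i}+a^i\frac{\partial}{\partial\dot q^i}$ is a well-defined vector field on $TQ$; under this hypothesis the condition $\rho_*\xi=\rho_*\bar\xi$ is simply the equality of functions $a^i=\bar a^i$.

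For the forward implication I would write \eqref{eq:equivalence_conditions1} explicitly as $a^i(q,\dot q)=\bar a^i(q,\dot q,\zeta(q,\dot q,z))$, the left-hand side being $z$-independent because $\xi$ is $\rho$-projectable. Differentiating in $z$ and using $\frac{\partial\zeta}{\partial z}\neq0$ forces $\frac{\partial\bar a^i}{\partial z}$ to vanish at every point of the form $(q,\dot q,\zeta(q,\dot q,z))$. Since $\phi$ is a horizontal diffeomorphism it satisfies $\rho\circ\phi=\rho$ and hence preserves the fibres of $\rho$, so for fixed $(q,\dot q)$ the map $z\mapsto\zeta(q,\dot q,z)$ is a diffeomorphism of the fibre onto itself; therefore $\zeta(q,\dot q,z)$ sweeps out the entire fibre and $\frac{\partial\bar a^i}{\partial z}\equiv0$. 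This shows $\bar\xi$ is $\rho$-projectable, and evaluating \eqref{eq:equivalence_conditions1} then gives $a^i=\bar a^i$, that is $\rho_*\xi=\rho_*\bar\xi$. This direction is purely algebraic and carries the structural content of the statement.

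For the converse I would assume $\bar\xi$ is $\rho$-projectable with $a^i=\bar a^i$ and construct $\zeta$. Condition \eqref{eq:equivalence_conditions1} is then automatic, since $\phi^*\bar a^i=\bar a^i(q,\dot q)=a^i$ for any choice of $\zeta$. It remains to solve \eqref{eq:equivalence_conditions2}, i.e. the quasilinear first-order equation $\xi(\zeta)=\bar b(q,\dot q,\zeta)$. I would solve this by the method of characteristics: along an integral curve of $\xi$ the equation becomes the ordinary differential equation $\frac{\dd}{\dd t}\zeta=\bar b(q,\dot q,\zeta)$, so choosing a hypersurface transverse to $\xi$ and prescribing $\zeta=z$ there produces a solution with $\frac{\partial\zeta}{\partial z}=1$ on that surface, hence $\frac{\partial\zeta}{\partial z}\neq0$ nearby. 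This yields a bona fide action function and the horizontal diffeomorphism $\phi=(\Id_{TQ},\zeta)$ realising the similarity.

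The main obstacle is precisely this last construction: integrating the characteristic ordinary differential equation produces $\zeta$ only locally, and one must verify that it is a genuine action function, namely that transversality $\frac{\partial\zeta}{\partial z}\neq0$ and surjectivity persist. Away from the zeros of $\xi$ the transversal initial-value construction is routine, and since the entire discussion is local, matching the coordinate framework underlying \eqref{eq:equivalence_conditions}, this is enough to conclude.
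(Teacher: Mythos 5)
Your reduction of both directions to the coordinate conditions \eqref{eq:equivalence_conditions} is the same as the paper's, and your forward direction is correct (the paper obtains $a^i=\bar a^i$ slightly more directly, by applying $(\phi^{-1})^*$ to \eqref{eq:equivalence_conditions1} and using that $\phi^{-1}$ fixes $(q^i,\dot q^i)$ while $a^i$ is $z$-independent; your differentiation-plus-fibre-surjectivity argument is equivalent). The converse, however, has a step that fails as written: the claim that prescribing $\zeta=z$ on an arbitrary hypersurface $\Sigma$ transverse to $\xi$ yields $\partial\zeta/\partial z=1$ on $\Sigma$. On $\Sigma$ the gradient of $\zeta$ is determined by its tangential part (which agrees with that of $z$) together with the PDE in the $\xi$-direction, and these two ingredients do not give $1$. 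Concretely, writing $\partial/\partial z = v + c\,\xi$ at $p\in\Sigma$ with $v\in T_p\Sigma$, and using $\zeta|_\Sigma=z|_\Sigma$ and $\xi(\zeta)=\bar b(q,\dot q,\zeta)$, one finds
\begin{equation*}
    \frac{\partial \zeta}{\partial z}\Big|_{p} \;=\; \dd z(v) + c\,\bar b(p) \;=\; \bigl(1 - c\,b(p)\bigr) + c\,\bar b(p) \;=\; 1 + c\,\bigl(\bar b(p)-b(p)\bigr),
\end{equation*}
which equals $1$ only if $c=0$ or $b=\bar b$ on $\Sigma$; since $b$ and $\bar b$ are precisely the coefficients in which $\xi$ and $\bar\xi$ differ, neither can be assumed. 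The repair is to take $\Sigma$ not merely transverse to $\xi$ but also invariant under $\partial/\partial z$ (so that $c=0$), e.g.\ a level set $\{q^1=\mathrm{const}\}$ where $\dot q^1\neq 0$, or $\{\dot q^1=\mathrm{const}\}$ where $a^1\neq 0$; then the $z$-translates of points of $\Sigma$ remain in $\Sigma$, the initial condition gives $\zeta=z$ along them, and $\partial\zeta/\partial z=1$ on $\Sigma$ does follow. Note that such a $\Sigma$ exists exactly where $\rho_*\xi\neq 0$ on $TQ$ (not merely away from zeros of $\xi$, as you state: at a point with $\dot q^i=0$, $a^i=0$ but $b\neq 0$, the field $\xi$ is nonzero yet parallel to $\partial/\partial z$ and no admissible $\Sigma$ exists); this is also the implicit regularity needed by the paper's own argument.

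It is worth seeing how the paper avoids this issue altogether. Instead of connecting $\xi$ to $\bar\xi$ directly, it shows that any projectable SODE is horizontally equivalent to its truncation $\hat\xi=\dot q^i\,\partial/\partial q^i + a^i\,\partial/\partial\dot q^i$ (zero $z$-component), and concludes by transitivity. The PDE arising there,
\begin{equation*}
    \frac{\partial \zeta}{\partial q^i}\,\dot q^i + a^i\,\frac{\partial \zeta}{\partial \dot q^i} = b,
\end{equation*}
contains no $\partial\zeta/\partial z$ term, so one may add an arbitrary function $f(z)$ to any local solution and enforce $\partial\zeta/\partial z\neq 0$ afterwards. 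This decouples solvability of the PDE from the non-degeneracy requirement on the action function, which is exactly the point where your direct characteristics argument stumbles; your route is salvageable with the corrected choice of initial hypersurface, but the paper's detour through $\hat\xi$ makes the non-degeneracy free.
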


\begin{proof}
    Assume that $\xi$ is projectable and horizontally equivalent to $\bar{\xi}$ coordinates~\eqref{eq:SODE_coords}, then, by~\eqref{eq:equivalence_conditions1}, by taking the inverse of $\phi$, we obtain ${(\phi^{1-})}^* a^i = a^i = \bar{a}^i$, hence $\bar{\xi}$ is $\tau_1$-projectable and ${\tau_1}_* \xi = {\rho}_* \bar{\xi}$.

    For the converse, we will see if $\xi$ is projectable, then it is horizontally equivalent to
    \begin{equation}
        \hat{\xi} = \dot{q}^i \frac{\partial}{\partial q} + \bar{a}^i \frac{\partial}{\partial \dot{q}}.
    \end{equation}
    By transitivity of the equivalence relation, this will imply that $\xi$ is horizontally equivalent to any other SODE with the same projection. 

    Using~\eqref{eq:equivalence_conditions}, we see that $\xi$ and $\hat{\xi}$ are equivalent if and only if there exists a solution for the following equation
    \begin{equation}
        \frac{\partial \zeta}{\partial q^i} \dot{q}^i + a^i \frac{\partial \zeta}{\partial \dot{q}^i}  = b, \quad
        \frac{\partial \zeta}{\partial z} \neq 0.
    \end{equation}
    Since this is a linear, first order PDE, there exist local solutions.  Since the equation only involves partial derivatives of $\zeta$ with respect to $q$ and $\dot{q}$ adding a function of $z$ to the solution, we can obtain a new one so that $\frac{\partial \zeta}{\partial z} \neq 0$ does not vanish.

\end{proof}

\subsection{Equivalent contact Lagrangian systems}
Notice that extended Lagrangian systems are pullbacks by horizontal diffeomorphisms of usual Lagrangian systems. That is, $\phi = (\Id_{TQ}, \zeta)$ is an exact similarity for the contact systems $(TQ \times \mathbb{R}, \eta_L, E_L)$ and $(TQ \times \mathbb{R}, \eta_{\phi^*{L},\zeta}  E{\phi^* {L}, \zeta})$. Indeed, we have
\begin{align*}
    \phi^*E_L &= \phi^*(\Delta(L))- \phi^*(L) = (\phi_*\Delta)(\phi^*L)- \phi^*(L) = \eta^{\zeta}_{\phi^*L} \\
    \phi^* \eta_L &= \phi^*\dd \zeta - \phi^*\left(\frac{\partial L}{\partial \dot{q}^i}\right) \dd q^i = \dd \zeta - \left(\frac{\partial L}{\partial \dot{q}^i}\right)_\xi \dd q^i,
\end{align*}
by~\cref{prop:conformal_dynamical}. By~\cref{rem:factor}, we know that given to Lagrangian systems $(L,z)$ and $(\bar{L},{\zeta})$ the map $\phi^{-1}$ is a (conformal/dynamical) equivalence between the systems $(TQ \times \mathbb{R},E_L, \eta_L)$ and $(TQ \times \mathbb{R},E_L^{\zeta}, \eta^{\zeta}_L)$.

As a consequence, $L$ and $\bar{L}$ are horizontally equivalent if and only if $\xi_L = \xi_{\bar{L}, \zeta}$ for some action function $\zeta$. Hence, in order to study the problem of equivalent Lagrangians, we can equally study the following problem.

\begin{problem}[Equivalent Lagrangians]
    Which extended Lagrangian systems $(L,z)$ $(\bar{L},\zeta)$ have the same dynamics.
\end{problem}

\begin{definition}
    Two extended Lagrangian systems  $(L,z)$ $(\bar{L},\zeta)$  are \emph{equivalent} if $\xi_{L} = \xi_{\bar{L}, \zeta}$. If $\mathbb{\zeta}$ does not depend on $\dot{q}^i$, we say that they are \emph{strongly} equivalent.
\end{definition}

\begin{theorem}\label{thm:lagrangian.equivalence}
    Let $(L,z)$ and $(\bar{L}, {\zeta})$ be regular extended Lagrangian systems. Both systems are equivalent if and only if    
    \begin{subequations}\label{eq:L_p_conditions}
        \begin{gather}            
            \bar{L} = \liedv{\xi_L} (\zeta) =  \dot{q}^i \frac{\partial \zeta}{\partial q^i} + h_L^i \frac{\partial \zeta}{\partial \dot{q}^i} + \frac{\partial \zeta}{\partial z} L \label{eq:L_condition} \\
            \xi_L (p^{\bar{L},\zeta}_i) - \left( \frac{\partial \bar{L}}{\partial q^i} \right)_\zeta =
            \left( \frac{\partial \bar{L}}{\partial \zeta} \right)_\zeta p^{\bar{L},\zeta}_i, \label{eq:p_condition}
        \end{gather}
    \end{subequations}
        where
        \begin{equation}
            p_i^{\bar{L}, \zeta} = {\left(\frac{\partial \bar{L}}{\partial \dot{q}^i}\right)}_{\zeta}.
        \end{equation}
\end{theorem}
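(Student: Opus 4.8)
The plan is to lean on Theorem~\ref{thm:ext_var_principle}: the $\zeta$-Herglotz vector field $\xi_{\bar L,\zeta}$ is the unique extended SODE whose integral curves solve the $\zeta$-Herglotz equations, uniqueness being guaranteed by the $\zeta$-regularity of $\bar L$. Since $\xi_L$ is itself an extended SODE, the equivalence $\xi_L=\xi_{\bar L,\zeta}$ holds if and only if $\xi_L$ satisfies those same two equations. Both vector fields being SODEs, their $\partial/\partial q^i$ coefficients already coincide (both equal $\dot q^i$), so the whole content of the statement is the matching of the two remaining pieces, which I would extract by rewriting $\xi_L$ in the frame associated with $\zeta$.

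First I would expand $\xi_L=\dot q^i\,\partial/\partial q^i+h_L^i\,\partial/\partial\dot q^i+L\,\partial/\partial z$ in the $\zeta$-basis using the change-of-frame relations~\eqref{eq:cfractions}. Collecting terms, the $\partial/\partial q^i$ and $(\partial/\partial\dot q^i)_\zeta$ coefficients remain $\dot q^i$ and $h_L^i$, while the $\partial/\partial\zeta$ coefficient becomes exactly $\liedv{\xi_L}(\zeta)=\dot q^i\,\partial\zeta/\partial q^i+h_L^i\,\partial\zeta/\partial\dot q^i+L\,\partial\zeta/\partial z$. The second $\zeta$-Herglotz equation $\dot\zeta=\bar L$ asserts precisely that this coefficient equals $\bar L$, which is condition~\eqref{eq:L_condition}. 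Thus~\eqref{eq:L_condition} is equivalent to $\xi_L$ having the correct $\zeta$-component, and it certifies along the way that $\xi_L$ is an extended SODE with $\dot\zeta=\bar L$.

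Assuming~\eqref{eq:L_condition}, the first $\zeta$-Herglotz equation of Theorem~\ref{thm:ext_var_principle} applied to $\bar L$ is what turns into condition~\eqref{eq:p_condition}: along the integral curves of the governing SODE the total derivative $\dd/\dd t$ acts as the derivation $\xi_L$, so substituting $\dd/\dd t=\xi_L$ together with $p_i^{\bar L,\zeta}=(\partial\bar L/\partial\dot q^i)_\zeta$ and rearranging yields~\eqref{eq:p_condition}. For the converse, $\zeta$-regularity of $\bar L$ means this first equation determines the $(\partial/\partial\dot q^i)_\zeta$-coefficient of $\xi_{\bar L,\zeta}$ uniquely; hence if both~\eqref{eq:L_condition} and~\eqref{eq:p_condition} hold, then $\xi_L$ shares with $\xi_{\bar L,\zeta}$ its $\zeta$-component and its $(\partial/\partial\dot q^i)_\zeta$-coefficient, and since both are SODEs they must coincide.

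The hard part is purely bookkeeping: condition~\eqref{eq:p_condition} is read in the $\zeta$-frame whereas $\xi_L$ is presented in the $z$-frame, so the step converting $\xi_L$ via~\eqref{eq:cfractions} must be done carefully, as must the identification of $\dd/\dd t$ with the derivation $\xi_L$ (legitimate exactly because $\xi_L$ is the SODE carrying the dynamics). The $\zeta$-regularity hypothesis on $\bar L$ enters only in the converse, where it supplies the uniqueness that upgrades equality of the two frame-components to equality of the vector fields.
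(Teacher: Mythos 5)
Your proof is correct, and its skeleton matches the paper's --- read \eqref{eq:L_condition} and \eqref{eq:p_condition} as the two defining equations of $\xi_{\bar{L},\zeta}$ evaluated on $\xi_L$, then conclude by uniqueness --- but the characterization of $\xi_{\bar{L},\zeta}$ you invoke, and hence the uniqueness mechanism, is genuinely different. The paper checks that $\xi_L$ satisfies the contact-Hamiltonian equations of the system $(\eta_{\bar{L},\zeta}, E_{\bar{L},\zeta})$: the contraction $\contr{\xi_L}\eta_{\bar{L},\zeta} = -E_{\bar{L},\zeta}$ unpacks to \eqref{eq:L_condition}, and, given that, the identity $\liedv{\xi_L}\eta_{\bar{L},\zeta} = \frac{\partial \bar{L}}{\partial\zeta}\,\eta_{\bar{L},\zeta}$, contracted with the $\zeta$-frame, unpacks to \eqref{eq:p_condition} (the contractions with $(\partial/\partial\dot{q}^i)_\zeta$ and $\partial/\partial\zeta$ vanish identically); equality of the vector fields then comes for free from uniqueness of contact Hamiltonian vector fields, $\zeta$-regularity of $\bar{L}$ being exactly what makes $\eta_{\bar{L},\zeta}$ a contact form. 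You instead use the coordinate Herglotz equations of \cref{thm:ext_var_principle} as the characterization, so you must supply uniqueness yourself, which you correctly locate in the invertibility of $W^{\zeta}_{ij}$: for a SODE whose $\zeta$-component is $\bar{L}$, the first Herglotz identity is affine in the accelerations with coefficient matrix $W^{\zeta}_{ij}$. Your frame computation (in the $\zeta$-basis, $\xi_L$ has coefficients $\dot{q}^i$, $h_L^i$ and $\liedv{\xi_L}\zeta$) is exactly right and is in fact the computational core of the paper's proof as well. What the paper's route buys is not having to argue solvability and uniqueness of the SODE by hand; what yours buys is elementarity and a transparent view of where regularity enters.

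One caveat you should flag: \cref{thm:ext_var_principle} as printed states the $\zeta$-Herglotz equation as $\left(\frac{\partial \bar{L}}{\partial q^i}\right)_\zeta - \frac{\dd}{\dd t}\left(\frac{\partial \bar{L}}{\partial \dot{q}^i}\right)_\zeta = \left(\frac{\partial \bar{L}}{\partial \dot{q}^i}\right)_\zeta \frac{\partial \bar{L}}{\partial \zeta}$, and substituting $\dd/\dd t = \xi_L$ there literally yields \eqref{eq:p_condition} with the opposite sign on the right-hand side. This is a sign inconsistency internal to the paper: the contact-Hamiltonian computation (which the paper's own converse carries out) produces the sign appearing in \eqref{eq:p_condition}, so the typo is in the stated Herglotz equations, not in the theorem you are proving. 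Your step ``rearranging yields \eqref{eq:p_condition}'' is therefore right in substance, but it silently corrects the printed form of \cref{thm:ext_var_principle}; as literally written, your substitution would not reproduce \eqref{eq:p_condition} verbatim, and this deserves an explicit remark.
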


\begin{proof}
    Assume that  both systems are equivalent. By definition $\xi_{L,z} = \xi_{\bar{L}, {\zeta}}$, hence $\bar{L} = \xi_{\bar{L}, {\zeta}} = \liedv{\xi_L} (\zeta)$. Also,~\eqref{eq:p_condition}, after changing  $\xi_{L,z}$ by $\xi_{\bar{L}, {\zeta}}$, are just the Herglotz equations for $(\bar{L}, {\zeta})$.
    
    Conversely, assume that conditions~\eqref{eq:L_p_conditions} hold. Thus, we  need to prove that both Herglotz vector fields are equal. We can do that by proving that $\xi_{L,z}$ is the Hamiltonian vector field of $\eta_{\bar{L},\zeta}$ with respect to the energy function $E_{\bar{L}}$. That is,
    \begin{align*}
        \eta_{\bar{L},\zeta} (\xi_{L,z}) &= - E_{\bar{L},\zeta}, \\
        \liedv{\xi_{L,z}} \eta_{\bar{L},\zeta} &=  \frac{\partial \bar{L}}{\partial \zeta} \eta_{\bar{L},\zeta}.
    \end{align*}
    Expanding the first equation, we obtain
     \begin{align*}
        \liedv{\xi_{L,\zeta}} (\zeta) - p_i^{\bar{L}, \zeta} \dot{q}^i = -(p_i^{\bar{L}, \zeta} \dot{q}^i - \bar{L}),
     \end{align*}
     hence it is equivalent to~\cref{eq:L_condition}. Assuming that the first condition holds, the second condition yields
     \begin{align*}
          \dd \bar{L} - \liedv{\xi_{L,z}}(p_i^{\bar{L}, \zeta}) \dd q^i -  p_i^{\bar{L}, \zeta} \dd \dot{q}^i = \frac{\partial \bar{L}}{\partial \zeta} (\dd \zeta  - p_i^{\bar{L}, \zeta} \dd {q}^i).
     \end{align*}
     Contracting with $(\partial / \partial \dot{q}^i)_\zeta$ and $\partial/\partial \zeta$ we obtain $0$ on both sides of the equation. If we contract with $(\partial / \partial {q}^i)_\zeta$, we obtain
     \begin{equation}
         {\left(\frac{\partial \bar{L}}{\partial q^i}\right)}_{\zeta} - \liedv{\xi_{L,z}}(p_i^{\bar{L}, \zeta}) = - \frac{\partial L}{\partial \zeta}p_i^{\bar{L}, \zeta},
     \end{equation}
     which is~\cref{eq:p_condition}.
\end{proof}

The notion of strong equivalence has a nice characterization: it coincides with that of conformal equivalence. Moreover, it is easy to find a closed form for these Lagrangians

\begin{theorem}\label{thm:lagrangian_strong_equivalece}
    $(L, z)$ and $(\bar{L}, \zeta)$ be regular extended Lagrangian systems. Then, the following are equivalent
    \begin{enumerate}
        \item $(L, z)$ and $(\bar{L}, \zeta)$ are strongly equivalent.
        \item  $\Id_{TQ \times \mathbb{R}}$ is a conformal similarity between $(TQ \times \mathbb{R}, \eta_L, E_L)$ and $(TQ \times \mathbb{R}, \eta^{\zeta}_L, E_{\bar{L}, \zeta})$.
        \item We have
        \begin{equation}\label{eq:lag_equiv}
            \frac{\partial \zeta}{\partial z} L + \dot{q}^i \frac{\partial \zeta}{\partial q^i} = \bar{L},
        \end{equation}
        and $\zeta$ is independent of $\dot{q}^i$.
        \end{enumerate}
\end{theorem}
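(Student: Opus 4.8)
The plan is to prove the cycle of implications (1)$\Rightarrow$(3)$\Rightarrow$(2)$\Rightarrow$(1), so that the loop closes and the three statements become equivalent. Throughout I write $f:=\partial\zeta/\partial z$, which is non-vanishing precisely because $\zeta$ is an action function (the splitting $TP=\ker T\rho\oplus\ker T\zeta$ forces $T\zeta(\partial/\partial z)\neq 0$), and I read the second contact Hamiltonian system in (2) as the one attached to $(\bar L,\zeta)$, namely $(\eta^\zeta_{\bar L},E_{\bar L,\zeta})$. For (1)$\Rightarrow$(3) there is essentially nothing to do: strong equivalence means $\xi_L=\xi_{\bar L,\zeta}$ together with $\zeta$ independent of $\dot q^i$, so in particular the two regular systems are equivalent and \cref{thm:lagrangian.equivalence} applies. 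Its first condition \eqref{eq:L_condition} reads $\bar L=\dot q^i\,\partial\zeta/\partial q^i+h_L^i\,\partial\zeta/\partial\dot q^i+(\partial\zeta/\partial z)L$, and imposing $\partial\zeta/\partial\dot q^i=0$ kills the middle term, leaving exactly \eqref{eq:lag_equiv}.

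For (3)$\Rightarrow$(2) I would carry out the one genuine computation. Since $\zeta$ is independent of $\dot q^i$, the correction terms in \eqref{eq:cfractions} drop out, so $(\partial/\partial\dot q^i)_\zeta=\partial/\partial\dot q^i$ and $\dd\zeta=(\partial\zeta/\partial q^i)\dd q^i+f\,\dd z$. Differentiating \eqref{eq:lag_equiv} gives $p^{\bar L,\zeta}_i=(\partial\bar L/\partial\dot q^i)_\zeta=f\,\partial L/\partial\dot q^i+\partial\zeta/\partial q^i$. Substituting this into $\eta^\zeta_{\bar L}=\dd\zeta-p^{\bar L,\zeta}_i\dd q^i$ makes the $\partial\zeta/\partial q^i$ terms cancel, yielding $\eta^\zeta_{\bar L}=f\,\eta_L$; likewise $E_{\bar L,\zeta}=\dot q^i p^{\bar L,\zeta}_i-\bar L=f\,(\dot q^i\partial L/\partial\dot q^i-L)=f\,E_L$, the extra $\dot q^i\,\partial\zeta/\partial q^i$ contributions again cancelling. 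As $f$ is non-vanishing, these two identities say precisely that $\Id_{TQ\times\RR}$ is a conformal similarity between the two systems, which is (2).

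For (2)$\Rightarrow$(1) I would first extract the constraint on $\zeta$: writing $\eta^\zeta_{\bar L}=f\eta_L$ in the coordinates $(q^i,\dot q^i,z)$, the only $\dd\dot q^i$ contribution on the left comes from $\dd\zeta$ while the right-hand side has none, forcing $\partial\zeta/\partial\dot q^i=0$, i.e.\ $\zeta$ independent of $\dot q^i$. The identity $\eta^\zeta_{\bar L}=f\eta_L$ also exhibits $\Id$ as a conformal contactomorphism, so \cref{prop:conformal_dynamical} applies: being in addition a conformal similarity (the energies satisfy $E_{\bar L,\zeta}=fE_L$), it is a dynamical similarity, that is $\xi_L=\xi_{\bar L,\zeta}$. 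Equivalence together with $\zeta$ independent of $\dot q^i$ is exactly strong equivalence, closing the cycle. I expect the only delicate point to be the bookkeeping between the two coordinate systems $(q^i,\dot q^i,z)$ and $(q^i,\dot q^i,\zeta)$ --- in particular handling $(\partial/\partial\dot q^i)_\zeta$ correctly through \eqref{eq:cfractions} --- but once the hypothesis $\partial\zeta/\partial\dot q^i=0$ is in force these corrections vanish and each remaining step is a routine comparison of coefficients.
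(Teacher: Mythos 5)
Your proof is correct and follows essentially the same route as the paper: the three implications you prove (1)$\Rightarrow$(3), (3)$\Rightarrow$(2), (2)$\Rightarrow$(1) are exactly the paper's cycle (3)$\Rightarrow$(2)$\Rightarrow$(1)$\Rightarrow$(3), with the same computations — the conformal factor $f=\partial\zeta/\partial z$ giving $\eta^\zeta_{\bar L}=f\eta_L$ and $E_{\bar L,\zeta}=fE_L$, the contraction with $\partial/\partial\dot q^i$ forcing $\partial\zeta/\partial\dot q^i=0$, and \cref{prop:conformal_dynamical} upgrading conformal to dynamical similarity. Your use of condition \eqref{eq:L_condition} for (1)$\Rightarrow$(3) is just the paper's identity $\bar L=\liedv{\xi_L}(\zeta)$ stated via \cref{thm:lagrangian.equivalence}, so there is no substantive difference.
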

\begin{proof}
    We will proof that $3 \implies 2 \implies 1 \implies 3$.

    Assume that~\eqref{eq:lag_equiv} holds. Then, 
    \begin{align*}
        \eta^\xi_{\bar{L}}&=\dd \zeta-\left(\frac{\partial \bar{L}}{\partial \dot{q}^i}\right)_\zeta \dd q^i=\frac{\partial \zeta}{\partial q^i}\dd q^i+\frac{\partial \zeta}{\partial z}\dd z-\left(\frac{\partial \zeta}{\partial z}\frac{\partial L}{\partial \dot{q}^i}+\frac{\partial \zeta}{\partial q^i}\right)\dd q^i=\frac{\partial \zeta}{\partial z}\eta_L,\\
        E^\xi_L &=\dot{q}^i \left(\frac{\partial  \bar{L}}{\partial \dot{q}^i}\right)_\zeta-\bar{L}=\dot{q}^i\left(\frac{\partial \zeta}{\partial z}\frac{\partial L}{\partial \dot{q}^i}+\frac{\partial \zeta}{\partial q^i}\right)-\frac{\partial \zeta}{\partial z} L - \dot{q}^j \frac{\partial \zeta}{\partial q^i}=\frac{\partial \zeta}{\partial z} E_L.
    \end{align*}
    Therefore, both systems are conformally equivalent.

    Now, assume that both systems are conformally equivalent. Thus, $\eta_L = f \eta_L$ for a non-vanishing $f$. We now take the contraction of the previous expression with $\partial / \partial \dot{q}^i$. We obtain
    \begin{equation}
     0  = f \frac{\partial \bar{\zeta}}{\partial \dot{q}^i}.
    \end{equation}
    Hence, both extended Lagrangian systems are strongly equivalent.

    Last of all, assume that $(L,z)$ and $(\bar{L},\zeta)$ are strongly equivalent. Then $\xi_L(\zeta)= \xi_{\bar{L}, \zeta}(\zeta)$, thus 
    \begin{equation}
        \frac{\partial \zeta}{\partial z} L + \dot{q}^i \frac{\partial \zeta}{\partial q^i} = \bar{L}.
    \end{equation}
\end{proof}

Hence, the set of strongly equivalent Lagrangians is parametrized by a function $\zeta_0: Q \times \mathbb{R} \to \mathbb{R}$.

\begin{remark}\label{remark:symplectic}
    This result includes the symplectic Lagrangian equivalence \cite{Marmo1985,Ranada1991}. Consider Lagrangians that do not depend on $z$ and take $\zeta=cz+\nu(q^i)$, with $c$ a non-zero constant. Then $\bar{L}(q^i,\dot{q}^i)=cL(q^i,\dot{q}^i)+\dot{q}^i\frac{\partial\nu}{\partial q^i}$. In particular, we have $\theta^{\zeta}_{\bar{L}}=\theta_{\bar{L}}=c\theta_L+\dd \nu$.
\end{remark}


\subsubsection{Variational formulation}

We will now analyze the problem from a variational perspective. 

Let $(L,z)$ and $(L,\zeta)$ be extended Lagrangian systems. We remind that, by \cref{thm:ext_var_principle}, the critical points of the action functional are the projections onto $Q$ of the integral curves of their Herglotz vector fields. Thus,  that if the two systems are equivalent, their corresponding action functionals must have the same critical points. But this is not sufficient, the curves on $Q$ have to be lifted to $P$, on the same way through the operator $\mathcal{X}$. Thus, we have
\begin{proposition}\label{thm:}
    Let $(L,z)$ and $(\bar{L}, \zeta)$ be Lagrangian systems. Both systems are equivalent if and only if
    \begin{enumerate}
        \item $\gamma:[0,1] \to Q$ is a critical point of $\mathcal{Z}_{L,z,z_0}$ if and only if it is a critical point of $\mathcal{Z}_{\bar{L},\zeta,\zeta_0}$, where, $\zeta_0 = \zeta(\gamma(0), \dot{\gamma}(0), z_0)$.
        \item For every critical point $\gamma$ of one has that $\mathcal{X}_{L,z,z_0}(\gamma) = \mathcal{X}_{\bar{L},\zeta,\zeta_0}(\gamma)$.
    \end{enumerate}
\end{proposition}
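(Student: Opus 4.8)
The plan is to read both conditions, through the variational principle \cref{thm:ext_var_principle}, as the single statement that the two Herglotz vector fields share all their integral curves, and hence coincide. Recall that equivalence of $(L,z)$ and $(\bar{L},\zeta)$ means exactly $\xi_{L,z}=\xi_{\bar{L},\zeta}$, and that \cref{thm:ext_var_principle} identifies the critical points of each action with the base paths $\gamma$ whose lift $\mathcal{X}(\gamma)$ to $P$ is an integral curve of the corresponding Herglotz field. The crucial bookkeeping observation is that $\mathcal{X}_{L,z,z_0}(\gamma)$ and $\mathcal{X}_{\bar{L},\zeta,\zeta_0}(\gamma)$ live in the \emph{same} manifold $P$, both have $\rho$-projection $\gamma'$, and a curve in $P$ is determined by its $\rho$-projection together with one scalar component (its $z$- or $\zeta$-value), the latter being fixed by the first-order ODE $\dot z=L$ (resp.\ $\dot\zeta=\bar{L}$) and an initial value. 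Note that $\xi_{L,z}(z)=L$ and $\xi_{\bar{L},\zeta}(\zeta)=\bar{L}$, so these ODEs are exactly the tangency of an integral curve to the action function.

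For the forward implication I assume $\xi_{L,z}=\xi_{\bar{L},\zeta}$. Given a critical point $\gamma$ of the $(L,z,z_0)$ problem, \cref{thm:ext_var_principle} makes $\sigma:=\mathcal{X}_{L,z,z_0}(\gamma)$ an integral curve of $\xi_{L,z}$, hence of $\xi_{\bar{L},\zeta}$. Its starting point satisfies $\rho(\sigma(0))=\gamma'(0)$ and $z(\sigma(0))=z_0$, so $\zeta(\sigma(0))=\zeta(\gamma(0),\dot\gamma(0),z_0)=\zeta_0$, and along $\sigma$ one has $\dot\zeta=\xi_{\bar{L},\zeta}(\zeta)=\bar{L}$. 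Thus $\sigma$ satisfies every defining property of $\mathcal{X}_{\bar{L},\zeta,\zeta_0}(\gamma)$, and uniqueness of the defining ODE forces $\sigma=\mathcal{X}_{\bar{L},\zeta,\zeta_0}(\gamma)$; being an integral curve of $\xi_{\bar{L},\zeta}$, \cref{thm:ext_var_principle} shows $\gamma$ is critical for $(\bar{L},\zeta,\zeta_0)$. This yields condition 2 and one inclusion of condition 1; the reverse inclusion is the symmetric argument, using that $\zeta_0=\zeta(\gamma(0),\dot\gamma(0),z_0)$ is invertible in $z_0$ (since $\partial\zeta/\partial z\neq0$) to recover $z_0$ from $\zeta_0$.

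For the converse I assume conditions 1 and 2 and fix an arbitrary $p_0\in P$. I run the integral curve $\sigma$ of $\xi_{L,z}$ through $p_0$ on a short interval, set $\gamma:=\rho_0\circ\sigma$ and $z_0:=z(p_0)$; by \cref{thm:ext_var_principle}, $\gamma$ is then critical for $(L,z,z_0)$ with $\sigma=\mathcal{X}_{L,z,z_0}(\gamma)$. Condition 1 makes $\gamma$ critical for $(\bar{L},\zeta,\zeta_0)$ with $\zeta_0=\zeta(p_0)$, so \cref{thm:ext_var_principle} renders $\mathcal{X}_{\bar{L},\zeta,\zeta_0}(\gamma)$ an integral curve of $\xi_{\bar{L},\zeta}$, and condition 2 identifies it with $\sigma$. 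Hence $\sigma$ is simultaneously an integral curve of $\xi_{L,z}$ and of $\xi_{\bar{L},\zeta}$ through $p_0$ with the same parametrization; differentiating at $t=0$ gives $\xi_{L,z}(p_0)=\dot\sigma(0)=\xi_{\bar{L},\zeta}(p_0)$. Since $p_0$ is arbitrary, the two fields agree and the systems are equivalent.

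The step I expect to demand the most care is the matching of the two lift operators in both directions: one must check that a single curve in $P$ can be read either as the $z$-lift of $\gamma$ for $(L,z,z_0)$ or as the $\zeta$-lift for $(\bar{L},\zeta,\zeta_0)$, which rests on the compatibility of initial data through $\zeta_0=\zeta(\gamma(0),\dot\gamma(0),z_0)$ and on the fibrewise invertibility of the change of action function. Everything else is a clean dictionary between \enquote{critical point}, \enquote{integral curve of the Herglotz field}, and \enquote{equality of the Herglotz fields}.
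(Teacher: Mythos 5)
Your proof is correct and follows the same route the paper intends: the paper states this proposition without a separate proof, presenting it as a direct consequence of \cref{thm:ext_var_principle} (critical points $=$ projections of integral curves of the Herglotz fields, plus agreement of the lifts $\mathcal{X}$), which is exactly the dictionary you make explicit. Your write-up simply fills in the details the paper leaves implicit — the identification of the two lifts via uniqueness of the defining ODE and the pointwise recovery of $\xi_{L,z}=\xi_{\bar{L},\zeta}$ from shared integral curves — so no further changes are needed.
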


It will be useful for our purposes to have a geometric characterization of the operator $\mathcal{X}$.
\begin{proposition}\label{thm:X_operator_SODE}
    Let $L:P \to \mathbb{R}$ be a Lagrangian function and let $\xi$ be an extended SODE. Then, for every integral curve $\delta:[0,1] \to P$ of $\xi$, if we let $\gamma = \rho_0^* \delta :[0,1] \to Q$ and $\zeta(\delta(0)) = \zeta_0$, we have
    \begin{equation}\label{eq:liedv_X_operator}
        \mathcal{X}_{L,\zeta,\zeta_0} (\gamma) = \delta
    \end{equation}
    if and only if
    \begin{equation}
            \liedv{\xi} \zeta = L.
    \end{equation}
\end{proposition}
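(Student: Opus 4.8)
The plan is to unwind both conditions into statements about the coordinate function $\zeta\circ\delta$ along the integral curve and show they coincide. First I would recall the defining property of the operator $\mathcal{X}_{L,\zeta,\zeta_0}$: for a path $\gamma\in\Omega(q_0,q_1)$, the curve $\mathcal{X}_{L,\zeta,\zeta_0}(\gamma)$ is the unique curve in $P$ satisfying $\rho\circ\mathcal{X}_{L,\zeta,\zeta_0}(\gamma)=\gamma'$, whose $\zeta$-component $\mathcal{Z}_{L,\zeta,\zeta_0}(\gamma)=\zeta\circ\mathcal{X}_{L,\zeta,\zeta_0}(\gamma)$ solves the initial value problem $\frac{\dd}{\dd t}\bigl(\zeta\circ\mathcal{X}_{L,\zeta,\zeta_0}(\gamma)\bigr)=L\circ\mathcal{X}_{L,\zeta,\zeta_0}(\gamma)$ with $\mathcal{Z}(0)=\zeta_0$. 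So $\mathcal{X}_{L,\zeta,\zeta_0}(\gamma)$ is pinned down by two things: it projects under $\rho$ to $\gamma'$, and its $\zeta$-coordinate obeys the stated ODE with the stated initial value.

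Next I would examine the given curve $\delta$. Since $\xi$ is an extended SODE, \cref{lem:sode} gives $S(\rho_*\xi)=\Delta\circ\rho$, which in coordinates means the $\dot q^i$-component of $\rho_*\xi$ equals $\dot q^i$; hence along any integral curve $\delta$ of $\xi$ the $TQ$-part $\rho\circ\delta$ is a lifted curve, i.e.\ $\rho\circ\delta=\gamma'$ where $\gamma=\rho_0\circ\delta$. This shows $\delta$ automatically satisfies the first half of the characterization of $\mathcal{X}_{L,\zeta,\zeta_0}(\gamma)$, namely the $\rho$-projection condition, and it does so regardless of any hypothesis on $\zeta$. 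Therefore the \emph{only} remaining thing distinguishing $\delta$ from $\mathcal{X}_{L,\zeta,\zeta_0}(\gamma)$ is whether the $\zeta$-coordinate of $\delta$ obeys the correct ODE. By uniqueness of solutions to the initial value problem defining $\mathcal{X}$, equality \eqref{eq:liedv_X_operator} holds if and only if $\zeta\circ\delta$ solves that same ODE with the same initial condition.

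The crux is then the identity
\begin{equation*}
    \frac{\dd}{\dd t}\bigl(\zeta\circ\delta\bigr) = (\liedv{\xi}\zeta)\circ\delta.
\end{equation*}
This is just the statement that the time-derivative of a function along an integral curve of $\xi$ equals the Lie derivative (directional derivative) $\xi(\zeta)=\liedv{\xi}\zeta$ evaluated along that curve. Combining this with the required ODE $\frac{\dd}{\dd t}(\zeta\circ\delta)=L\circ\delta$ shows that $\zeta\circ\delta$ satisfies the defining ODE exactly when $(\liedv{\xi}\zeta)\circ\delta=L\circ\delta$ holds along $\delta$. Since the initial condition $\zeta(\delta(0))=\zeta_0$ is assumed, uniqueness closes the equivalence in one direction. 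For the converse I would note that if $\liedv{\xi}\zeta=L$ holds identically on $P$ (not merely along one curve), then $\zeta\circ\delta$ solves the right ODE for every integral curve and hence $\mathcal{X}_{L,\zeta,\zeta_0}(\gamma)=\delta$.

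The main subtlety to handle carefully is the logical scope of $\liedv{\xi}\zeta=L$: the forward implication only needs this identity \emph{along} the single curve $\delta$, whereas the backward implication is naturally phrased as an identity of functions on all of $P$. I would argue that since the statement is meant to hold for \emph{every} integral curve $\delta$ and every initial point, the two formulations agree, because integral curves of $\xi$ pass through every point of $P$ and thus testing $\liedv{\xi}\zeta=L$ along all of them is equivalent to the global identity. This pointwise-versus-global matching is the one place where I would take care, but it is a routine flow-box argument rather than a genuine obstacle; the computation itself is immediate once the defining ODE of $\mathcal{X}$ and the formula $\frac{\dd}{\dd t}(\zeta\circ\delta)=(\xi\zeta)\circ\delta$ are in hand.
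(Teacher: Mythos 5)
Your proposal is correct and follows essentially the same route as the paper's proof: both reduce \eqref{eq:liedv_X_operator} to the statement that $\zeta\circ\delta$ satisfies the defining initial value problem of $\mathcal{Z}_{L,\zeta,\zeta_0}$, using the identity $\frac{\dd}{\dd t}(\zeta\circ\delta)=(\liedv{\xi}\zeta)\circ\delta$ and uniqueness of ODE solutions. Your treatment is in fact more careful than the paper's terse argument, since you make explicit both the $\rho$-projection condition (via \cref{lem:sode}) and the pointwise-versus-global scope of $\liedv{\xi}\zeta=L$, which the paper leaves implicit.
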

\begin{proof}
    We note that~\eqref{eq:liedv_X_operator} holds if and only if for every integral curve $\delta$ of $\xi$ satisfies
    \begin{equation}
        \frac{\dd (\zeta \circ \delta)}{\dd t} = {L} \circ \delta,
    \end{equation}
    while
    \begin{equation}
        \frac{\dd (\zeta \circ \mathcal{X}_{L,z,z_0}) (\gamma)}{\dd t} = {L} \circ \mathcal{X}_{L,z,z_0}(\gamma).
    \end{equation}
    Since $\zeta(\delta(0)) = \zeta_0$, by uniqueness of solution of the above ODE, we conclude  that $ \mathcal{X}_{L,\zeta,\zeta_0} (\delta) = \xi$.
\end{proof}


We can assume a stronger hypothesis regarding the action; $\mathcal{X}_{L,z,z_0} = \mathcal{X}_{\bar{L},\zeta,\zeta_0}$ not only for the critical points of the action, but for every curve. We then obtain the following.

\begin{theorem}
    Let $(L,z)$ and $(\bar{L}, \zeta)$ be Lagrangian systems. Both systems are strictly equivalent if and only if for every curve $\gamma:[0,1] \to Q$, we have
    \begin{equation}
        \mathcal{X}_{L,z,z_0}(\gamma) = \mathcal{X}_{\bar{L},\zeta,\zeta_0}(\gamma),
    \end{equation}
    where $\zeta_0 = \zeta(\gamma(0), \dot{\gamma}(0), z_0)$.
\end{theorem}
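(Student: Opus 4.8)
The plan is to translate the condition ``$\mathcal{X}_{L,z,z_0}(\gamma)=\mathcal{X}_{\bar L,\zeta,\zeta_0}(\gamma)$ for every $\gamma$'' into a pointwise identity between $L$, $\bar L$ and $\zeta$, and then to recognize that identity as item~$(3)$ of \cref{thm:lagrangian_strong_equivalece}. First I would fix a curve $\gamma:[0,1]\to Q$ and set $\delta=\mathcal{X}_{L,z,z_0}(\gamma)$. Every such $\delta$ is an integral curve of some extended SODE $\xi=\dot q^i\frac{\partial}{\partial q^i}+a^i\frac{\partial}{\partial \dot q^i}+L\frac{\partial}{\partial z}$ (the coefficients $a^i$ being the components of $\ddot\gamma$ along $\delta$, extended arbitrarily), and for such a $\xi$ one has $\liedv{\xi}z=L$ automatically, so $\delta=\mathcal{X}_{L,z,z_0}(\gamma)$ indeed. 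Because $\delta(0)$ has $\zeta$-coordinate $\zeta_0=\zeta(\gamma(0),\dot\gamma(0),z_0)$ and $\partial\zeta/\partial z\neq0$ makes $\zeta$ a fibre coordinate of $\rho$, the two lifts share both their $\rho$-projection $\gamma'$ and their initial point. Hence \cref{thm:X_operator_SODE}, applied to the Lagrangian $\bar L$ and the action function $\zeta$, gives that $\mathcal{X}_{\bar L,\zeta,\zeta_0}(\gamma)=\delta$ if and only if $\liedv{\xi}\zeta=\bar L$ along $\delta$.

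The main point—and the step I expect to be the genuine obstacle—is to exploit the quantifier over all curves $\gamma$. As $\gamma$ ranges over all curves, the associated SODEs $\xi$ range over all extended SODEs with $z$-component equal to $L$; that is, the coefficients $a^i$ may be prescribed arbitrarily at each point of $P$. Writing out
\[
\liedv{\xi}\zeta=\dot q^i\frac{\partial\zeta}{\partial q^i}+a^i\frac{\partial\zeta}{\partial\dot q^i}+L\frac{\partial\zeta}{\partial z},
\]
the requirement $\liedv{\xi}\zeta=\bar L$ for every choice of $a^i$ forces the coefficient of $a^i$ to vanish, i.e.\ $\partial\zeta/\partial\dot q^i=0$, and what remains is
\[
\bar L=\dot q^i\frac{\partial\zeta}{\partial q^i}+\frac{\partial\zeta}{\partial z}L .
\]
Making this ``free variation of the acceleration'' rigorous (equivalently, choosing for each prescribed position, velocity and acceleration a curve $\gamma$ realizing them) is the only delicate part; everything else is bookkeeping in the coordinates $(q^i,\dot q^i,z)$.

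These are exactly the two conditions of item~$(3)$ of \cref{thm:lagrangian_strong_equivalece}, namely that $\zeta$ be independent of the velocities together with \eqref{eq:lag_equiv}, which that theorem shows to be equivalent to strong (strict) equivalence of $(L,z)$ and $(\bar L,\zeta)$. This settles the forward implication. For the converse I would simply run the computation backwards: assuming $\partial\zeta/\partial\dot q^i=0$ and $\bar L=\dot q^i\,\partial\zeta/\partial q^i+(\partial\zeta/\partial z)L$, the identity $\liedv{\xi}\zeta=\bar L$ holds for every SODE $\xi$ with $z$-component $L$, so by \cref{thm:X_operator_SODE} and uniqueness of solutions of the defining initial value problems we obtain $\mathcal{X}_{\bar L,\zeta,\zeta_0}(\gamma)=\delta=\mathcal{X}_{L,z,z_0}(\gamma)$ for every $\gamma$.
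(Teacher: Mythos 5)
Your proposal is correct and takes essentially the same route as the paper: both use \cref{thm:X_operator_SODE} to translate equality of the $\mathcal{X}$ operators into the condition $\liedv{\xi}\zeta=\bar{L}$ for every extended SODE $\xi$ with $\liedv{\xi}z=L$, then exploit the arbitrariness of the accelerations $a^i$ to force $\frac{\partial\zeta}{\partial\dot{q}^i}=0$ together with \eqref{eq:lag_equiv}, and conclude via \cref{thm:lagrangian_strong_equivalece} (running the argument backwards for the converse). The ``free variation of the acceleration'' step you flag as the delicate point is treated just as informally in the paper's own proof, so your extra care in passing from curves $\gamma$ to SODEs is, if anything, more detail than the paper gives.
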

\begin{proof}
    Assume that both systems have the same $\mathcal{X}$ operators. By~\cref{thm:X_operator_SODE}
    we have that both  for every extended SODE $\xi$ such that $\liedv{\xi} z= {L}$, that is, of the form
    \begin{equation}
        \xi = \dot{q}^i \frac{\partial }{\partial q^i} + a^i \frac{\partial }{\partial \dot{q}^i} + L \frac{\partial }{\partial z}
    \end{equation}
    one has that 
    \begin{equation}
        \liedv{\xi}{\zeta} =   \dot{q}^i \frac{\partial \zeta}{\partial q^i} + a^i \frac{\partial \zeta}{\partial \dot{q}^i} + \frac{\partial \zeta}{\partial z} L  = \bar{L}.
    \end{equation}
    Since this must hold for any extended SODE of this form and the accelerations $a^i$ are arbitrary, then it is necessary that $\zeta$ does not depend on $\dot{q}^i$. Moreover, since 
    \begin{equation}\label{eq:strong_equiv_cond1}
         \dot{q}^i \frac{\partial \zeta}{\partial q^i} +  \frac{\partial \zeta}{\partial z} L  = \bar{L},
    \end{equation}
    by \cref{thm:lagrangian_strong_equivalece}, both systems are strongly equivalent. 
    
    Conversely, if both systems are strongly equivalent, again, by \cref{thm:lagrangian_strong_equivalece} we know that $\zeta$ does not depend on the velocities and that~\eqref{eq:strong_equiv_cond1} holds. Thus, for every SODE satisfying $\liedv{\xi} z = L$ one has that $\liedv{\xi} \zeta = \bar{L}$ and vice versa. By~\cref{thm:X_operator_SODE}, both systems have the same $\mathcal{X}$ operators.
\end{proof}

\section{Examples}
The next examples are some applications of the previous results in equivalent Lagrangians, exploring their limits and implications.
\subsection{Total time derivative}
In symplectic geometry, adding a total derivative, that is, of the form $h_i(q^i) \dot{q}^i$, where $h_i$ are the coefficients of an exact 1-form $h_i \dd q^i=\dd h(q^i)$, produces Lagrangians with the same dynamical equations. In contact geometry one has to be careful, because the contact equations are not linear on the Lagrangian. Nevertheless, one can proceed in a similar fashion by considering the transformation $\zeta=z+h$, resulting in the Lagrangian:
$$
\bar{L}(q^i,\dot{q}^i,\zeta)=L(q^i,\dot{q}^i,z)+\frac{\partial h}{\partial q^i} \dot{q}^i\,.
$$

The extended Lagrangian systems $(L,z)$ and $(\bar{L},\zeta)$ are equivalent by \cref{thm:lagrangian_strong_equivalece}. When the Lagrangian $L$ does not depend on $z$ we recover the usual result of the symplectic case, as explained in \cref{remark:symplectic}.

\subsection{Lorentz force}

The classical Lagrangian to describe the motion of a particle under the Lorentz force is
$$
L= \sum_{i=1}^3\left(\frac{m}{2}(\dot{q}^i)^2+kA^i\dot{q}^i\right)-k\phi\,.
$$
If one performs a change of gauge by a function $h(q^i)$ (which we will assume it is time independent), then the new Lagrangian is $L+\frac{\partial h}{\partial q^i}\dot{q}^i$. Since the difference is a total derivative both Lagrangians have the same dynamical equations.

In \cite{GM-2021} a contact version of the previous Lagrangian is considered:
$$
\tilde{L}= \sum_{i=1}^3\left(\frac{m}{2}(\dot{q}^i)^2+kA^i\dot{q}^i\right)-k\phi-\gamma z\,.
$$
A change of gauge on $\tilde{L}$ has the effect of adding a total derivative term. Since $\tilde{L}$ depends on $z$, \cref{thm:lagrangian_strong_equivalece} tells us that the new Lagrangian after the change of gauge is not strongly equivalent to $\tilde{L}$ (using the same action function $z$). This is observed explicitly in \cite{GM-2021}, where the equations of motion are derived and they turn out not to be gauge invariant. 

In order to find a gauge invariant Lagrangian description of the Lorentz force in the contact setting, in \cite{GM-2021} is proposed a generalized description of the gauge given by a triple $(\phi,\mathbf{A},f)$ which transform by a function $h$ as $(\phi-\frac{\partial h}{\partial t},\mathbf{A}+\nabla{h},f-h)$. Then, the following Lagrangian is considered
$$
	L = \sum_{i=1}^3\left(\frac{m}{2}(\dot{q}^i)^2+k\dot{q}^i\left(A^i+\frac{\partial f}{\partial q^i}\right)\right)-k\phi-\gamma z.
$$
This Lagrangian is invariant under a gauge transformation, therefore, there is no need to invoke equivalence results. Alternatively, one can consider the Lagrangian 
$$
	L = \sum_{i=1}^3\left(\frac{m}{2}(\dot{q}^i)^2+k\dot{q}^iA^i\right)-k\phi-\gamma (z+kf)\,,
$$
which, after a gauge change by $h$ (and renaming $z$ by $\zeta$) transforms into
$$
	\bar{L}(q^i,\dot{q}^i,\zeta) = \sum_{i=1}^3\left(\frac{m}{2}(\dot{q}^i)^2+k\dot{q}^iA^i\right)-k\phi-\gamma (\zeta-kh+kf)+k\frac{\partial h}{\partial q^i}\dot{q}^i\,.
$$
Using \cref{thm:lagrangian_strong_equivalece} one can check that $(L,z)$ and $(\bar{L},\zeta=z+kh)$ are strongly equivalent.

\subsection{Parachute equation}

The parachute equations models a falling object under the action of constant gravity with drag proportional to the square of the velocity:
$$
\ddot{y} - \gamma \dot{y}^2 + g = 0\,.
$$
 
 In \cite{GGMRR-2019b}  a contact Lagrangian for the parachute equation is presented.
 $$
 L=\frac12 \dot{y}^2-\frac{mg}{2\gamma}(e^{2\gamma y}-1)+2\gamma \dot{y} z\,.
 $$
 This Lagrangian has an exponential of the position in it. One wonders if there exists a more elegant strongly equivalent Lagrangian. Theorem \ref{thm:lagrangian_strong_equivalece} gives no-go results of this question.

We want to find a strongly equivalent Lagrangian which conserve the kinetic energy term, that is, with the structure:
$$
\bar{L}=\frac12 \dot{y}^2+a(y,\zeta)\dot{y}+b(y,\zeta)\,.
$$
From the \cref{thm:lagrangian_strong_equivalece}, $\zeta$ has to satisfy the identity 
$$
\frac12 \dot{y}^2+a(y,\zeta)\dot{y}+b(y,\zeta)=\dot{y}\frac{\partial\zeta}{\partial y}+\frac{\partial\zeta}{\partial z}\left(\frac12 \dot{y}^2-\frac{mg}{2\gamma}(e^{2\gamma y}-1)+2\gamma \dot{y} z \right)\,.
$$
This implies that $\frac{\partial\zeta}{\partial z}=1$, thus $\zeta=z+f(y)$. Then $a(y,\zeta)=\frac{\partial\zeta}{\partial y}+2\gamma z=f'-2\gamma f+2\gamma\zeta$ and $b(y,\zeta)=-\frac{mg}{2\gamma}(e^{2\gamma y}-1)$. Therefore, the possible Lagrangians are:
$$
\bar{L}=\frac12\dot{y}^2+\left(f'-2\gamma f+2\gamma\zeta\right)\dot{y}-\frac{mg}{2\gamma}(e^{2\gamma y}-1)\,.
$$

Thus, the exponential and the term proportional to $\dot{y}z$ are necessary for a contact Lagrangian of this type to describe the parachute equation.

\subsection{Non-strong equivalent Lagrangians}

General equivalence of extended Lagrangian systems is given by \cref{thm:lagrangian.equivalence}, where the regularity hypothesis and \ref{eq:p_condition} condition are important, as we will see in this example. Consider the Lagrangian ($\gamma\neq0$)
\begin{align*}
    L(q,\dot{q},z)=\frac12\dot{q}^2-\gamma z\,,
\end{align*}
whose Herglotz vector field is 
$$
\xi_L=\dot{q}\frac{\partial}{\partial q}-\gamma\dot{q}\frac{\partial}{\partial \dot{q}}+L\frac{\partial}{\partial z}\,.
$$
Given an action function $\zeta=z+\dot{q}^n$ (with $n\neq0$), we can use condition \eqref{eq:L_condition} to compute the potential equivalent Lagrangian:

\begin{align*}
    \bar{L}(q,\dot{q},\zeta)=-\gamma n \dot{q}^n+\frac12\dot{q}^2-\gamma\zeta+\gamma \dot{q}^n\,.
\end{align*}
In general, $(L,z)$ and $(\bar{L},\zeta=\dot{q}^n)$ are not equivalent. First, we need to check \eqref{eq:p_condition}, which in this case imposes $(n-1)^2=n-1$, therefore $n$ can only be $1$ or $2$. These leaves us with the Lagrangians
$$
\bar{L}_1=\frac12\dot{q}^2-\gamma\zeta\,;\quad \bar{L}_2=\left(\frac{1}{2}-\gamma\right)\dot{q}^2-\gamma\zeta\,.
$$
$\bar{L}_1$ is a $\zeta$-regular Lagrangian, but $\bar{L}_2$ is only $\zeta$-regular if $\gamma\neq\frac12$. In this case, they are equivalent to $(L,z)$ in virtue of \cref{thm:lagrangian.equivalence}. We can check this explicitly by computing the Herglotz vector field of $(\bar{L},\zeta=\dot{q}^n)$.
$$
\xi_{\bar{L}, \zeta}=\dot{q}\left(\frac{\partial}{\partial q}\right)_\zeta+\bar{a}\left(\frac{\partial}{\partial \dot{q}}\right)_\zeta+g\frac{\partial}{\partial \zeta}\,.
$$
Equation \eqref{eq:herglotz2} tell us that $g=\bar{L}$. Equation \eqref{eq:herglotz1} is
\begin{equation}\label{eq:ex_no_strong}
-\bar{a}(1-\gamma n(n-1)^2\dot{q}^{n-2})=\gamma \dot{q}(1-\gamma n(n-1)\dot{q}^{n-2})\,.
\end{equation}
Since $a=-\gamma \dot{q}$ (the component of the vector field $\xi_{L}$ corresponding to $\frac{\partial}{\partial \dot{q}}$) does not depend on $z$, from \cref{prop:projectable} we know that $\bar{a}=a$. Thus,~\eqref{eq:ex_no_strong} becomes
$$
(1-\gamma n(n-1)^2\dot{q}^{n-2})=(1-\gamma n(n-1)\dot{q}^{n-2})\,,
$$
which is only satisfied if $n=1,2$. For $n=2$ and $\gamma=\frac12$, \eqref{eq:ex_no_strong} becomes $0=0$ and any function $\bar{a}$ is a possible solution, a sign that the system is singular and, in particular, not equivalent to $(L,z)$.

\section{The inverse problem revisited}\label{sec:inv_problem2}

\begin{problem}[Inverse Herglotz problem]
    Given a SODE $\xi$, determine weather it is an \emph{extended Herglotz vector field}, that it, weather there exists an extended Lagrangian system $(L,\zeta)$ such that $\xi = \xi^{\zeta}_L$.
\end{problem}

\begin{theorem}\label{thm:herglotz_distribution_inverse_problem2}
    A SODE $\xi$ is an extended Herglotz vector field if and only if there exists a contact form $\eta$, with the local expression
    \begin{equation}\label{eq:inverse_problem_form}
        \eta = \dd \zeta(q, \dot{q}, z) + y_i(q,\dot{q}, z) \dd q^i,
    \end{equation}
    such that $\xi$ is an infinitesimal conformal contactomorphism for $\eta$ and $\frac{\partial \zeta}{\partial z} \neq 0$.
\end{theorem}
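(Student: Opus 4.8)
My plan is to transport the proof of \cref{thm:inv_problem_contact_distribution} along the action function $\zeta$. The $\zeta$-adapted objects $S^\zeta$, $\Delta^\zeta$, $\eta^\zeta_L$ and the frame $\left((\partial/\partial q^i)_\zeta,(\partial/\partial\dot q^i)_\zeta,\partial/\partial\zeta\right)$ with coframe $(\dd q^i,\dd\dot q^i,\dd\zeta)$ are built precisely so that, in the coordinate system $(q^i,\dot q^i,\zeta)$ available because $\partial\zeta/\partial z\neq0$, they take the same form as the canonical objects do in $(q^i,\dot q^i,z)$. The given shape $\eta=\dd\zeta+y_i\dd q^i$ is exactly the $\zeta$-analogue of the form $\dd z-y_i\dd q^i$ that \cref{thm:inv_problem_contact_distribution} derives from its Legendrian hypotheses (the only change being a relabelling of the sign of $y_i$), so the whole argument should run after replacing $z$ by $\zeta$.

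For the forward implication I would simply take $\eta=\eta^\zeta_L=\dd\zeta-(\partial L/\partial\dot q^i)_\zeta\dd q^i$: this is a contact form of the required shape with $y_i=-(\partial L/\partial\dot q^i)_\zeta$, it has $\partial\zeta/\partial z\neq0$ since $\zeta$ is an action function, and $\xi=\xi^\zeta_L$ is an infinitesimal conformal contactomorphism because $\liedv{\xi}\eta^\zeta_L=(\partial L/\partial\zeta)\eta^\zeta_L$ by construction of the Herglotz vector field.

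The converse is the real content. Since $\partial\zeta/\partial z\neq0$ I would use $\zeta$ as the action function; by \cref{lem:sode} the SODE condition is independent of this choice, so in the adapted frame $\xi=\dot q^i(\partial/\partial q^i)_\zeta+\bar a^i(\partial/\partial\dot q^i)_\zeta+g_0\,\partial/\partial\zeta$. Setting $L:=\liedv{\xi}\zeta$ and expanding $\liedv{\xi}\eta=\dd L+\xi(y_i)\dd q^i+y_i\dd\dot q^i$, I impose $\liedv{\xi}\eta=g\eta$ and contract with the frame: pairing with $(\partial/\partial\dot q^i)_\zeta$ yields $y_i=-(\partial L/\partial\dot q^i)_\zeta$, so that $\eta=\eta^\zeta_L$ and, being a contact form, $L$ is $\zeta$-regular; pairing with $\partial/\partial\zeta$ yields $g=\partial L/\partial\zeta$.

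To finish I would avoid the remaining coordinate identity and appeal to uniqueness of the Herglotz vector field. The SODE property together with $y_i=-(\partial L/\partial\dot q^i)_\zeta$ give $\contr{\xi}\eta=\xi(\zeta)+y_i\dot q^i=L-\Delta^\zeta(L)=-E^\zeta_L$, while the hypothesis and $g=\partial L/\partial\zeta$ give $\liedv{\xi}\eta=(\partial L/\partial\zeta)\eta^\zeta_L$; these are exactly the two defining equations of $\xi^\zeta_L$, so $\xi=\xi^\zeta_L$. I expect the main obstacle to be bookkeeping rather than anything conceptual: one must track the $\zeta$-adapted frame through the relations \eqref{eq:cfractions} and check that $(\dd q^i,\dd\dot q^i,\dd\zeta)$ stays dual to it once $z$ is traded for $\zeta$. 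The conceptual simplification, and the step I would stress, is that the last Herglotz equation never has to be verified by hand: uniqueness of the Hamiltonian vector field of a contact system settles it.
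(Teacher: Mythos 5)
Your proof is correct, but it is organized differently from the paper's. The paper does not redo any computation in the $\zeta$-adapted frame: it pushes everything into the standard picture along the horizontal diffeomorphism $\phi=(\Id_{TQ},\zeta)$, observes that $\phi_*\xi$ is an infinitesimal conformal contactomorphism for $(\phi^{-1})^*\eta=\dd z-(\phi^{-1})^*(y_i)\,\dd q^i$, which has exactly the shape required by \cref{thm:inv_problem_contact_distribution}, quotes that theorem to get $\phi_*\xi=\xi_L$, and then uses the horizontal-equivalence machinery of Section~\ref{equiv.extended} (a horizontal diffeomorphism is a similarity between the systems attached to $(L,z)$ and $(\phi^*L,\zeta)$) to convert this into $\xi=\xi_{\phi^*L,\zeta}$. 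You instead stay in the $\zeta$-frame, rerun the Lie-derivative computation there (contraction with $(\partial/\partial\dot q^i)_\zeta$ gives $y_i=-(\partial L/\partial\dot q^i)_\zeta$, hence $\eta=\eta^\zeta_L$ and $\zeta$-regularity; contraction with $\partial/\partial\zeta$ gives $g=\partial L/\partial\zeta$), and then close by uniqueness of the contact Hamiltonian vector field rather than by checking the remaining Herglotz coordinate identity. Both of your nonstandard steps are sound: the frame expansion of the SODE is justified by \cref{lem:sode} together with the duality of $(\dd q^i,\dd\dot q^i,\dd\zeta)$ with the adapted frame, and the uniqueness you invoke is exactly the paper's own characterization of $X_f$ as the unique infinitesimal conformal contactomorphism satisfying $\eta(X_f)=-f$, applied to $f=E^\zeta_L$ once you have $\contr{\xi}\eta=L-\Delta^\zeta(L)=-E^\zeta_L$. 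What the paper's route buys is brevity: three lines, reusing \cref{thm:inv_problem_contact_distribution} and the equivalence results. What yours buys is self-containedness (no appeal to the fact that $\phi$ intertwines the two Herglotz vector fields) and a genuinely cleaner ending; indeed, your uniqueness trick would also shorten the paper's proof of \cref{thm:inv_problem_contact_distribution} itself, where the analogous coordinate identity is verified by hand.
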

\begin{proof}
    Assume that $\xi$ is a conformal contactomorphism for $\eta$. Then $\phi = (\Id, \zeta)$ is a horizontal equivalence. Moreover, $\phi_* \xi$ is a conformal contactomorphism for
    \begin{equation}
        (\phi^{-1})^* \eta = \dd z - (\phi^{-1})^*(y_i) \dd q^i,
    \end{equation}
    hence, by Theorem~\ref{thm:inv_problem_contact_distribution}, $ \phi_* \xi = \xi_L$, thus $\xi = \xi_{\bar{L}, \zeta}$, where $\bar{L} = \phi^* L$. 

    Conversely, if $\xi = \xi^\zeta_{\bar L}$, then we take $\eta = \eta_{L,\zeta}$. Thus, $\liedv{\xi} \eta = \frac{\partial \bar{L}}{\partial \zeta} \eta$.
\end{proof}

\begin{corollary}
    A SODE $\xi$ is an extended Herglotz vector field for an action function $\zeta: Q\times \RR \to \RR$ if and only if
    \begin{equation}\label{eq:inv_problem_pde}
        \left(\frac{\partial \xi(\zeta)}{\partial q^i}  - \xi \left(\frac{\partial \xi(\zeta)}{\partial \dot{q}^i} \right)  \right) \frac{\partial \zeta}{\partial z}  =
         \left(\frac{\partial \zeta}{\partial q^i} 
         -\frac{\partial \xi(\zeta)}{\partial \dot{q}^i}\right)
         \frac{\partial \xi(\zeta)}{\partial z} 
    \end{equation}
    and $\frac{\partial \zeta}{\partial z} \neq 0$.
\end{corollary}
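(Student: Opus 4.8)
The plan is to reduce the statement to \cref{thm:herglotz_distribution_inverse_problem2} and then make the ``infinitesimal conformal contactomorphism'' condition fully explicit for the prescribed action function $\zeta$. By that theorem, $\xi$ is an extended Herglotz vector field for $\zeta$ precisely when there is a contact form of the shape $\eta = \dd\zeta + y_i\,\dd q^i$, with $\partial\zeta/\partial z\neq0$, for which $\liedv{\xi}\eta = g\,\eta$ holds for some function $g$. Once $\zeta$ and $\xi$ are fixed, the only freedom left is the collection of functions $y_i$ and the conformal factor $g$, and the content of the corollary is that solving for these leaves exactly one residual obstruction, namely \eqref{eq:inv_problem_pde}. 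Here $\zeta$ is pulled back from $Q\times\RR$, so it does not depend on $\dot q^i$; hence $\left(\partial/\partial\dot q^i\right)_\zeta = \partial/\partial\dot q^i$ and no $\partial\zeta/\partial\dot q^i$ terms appear, which is why the displayed PDE involves only plain coordinate derivatives.

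Concretely, I would compute $\liedv{\xi}\eta$ in the coframe $\{\dd q^i,\dd\dot q^i,\dd z\}$. Using $\liedv{\xi}\dd\zeta = \dd(\xi(\zeta))$ and $\liedv{\xi}\dd q^i = \dd\dot q^i$ (the latter because $\xi$ is a SODE, so $\xi(q^i)=\dot q^i$), one obtains
\begin{equation*}
    \liedv{\xi}\eta = \dd\bigl(\xi(\zeta)\bigr) + \xi(y_i)\,\dd q^i + y_i\,\dd\dot q^i .
\end{equation*}
Writing $L:=\xi(\zeta)$ and expanding $\dd L$ together with $\dd\zeta$ (the latter having no $\dd\dot q^i$ component), the equation $\liedv{\xi}\eta = g\,\eta$ splits into three families of coefficient relations. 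The $\dd\dot q^i$ component forces $y_i = -\partial L/\partial\dot q^i$; the $\dd z$ component gives $g = (\partial L/\partial z)/(\partial\zeta/\partial z)$, which is exactly where $\partial\zeta/\partial z\neq 0$ is used; and the $\dd q^i$ component reads
\begin{equation*}
    \frac{\partial L}{\partial q^i} - \xi\!\left(\frac{\partial L}{\partial\dot q^i}\right) = g\left(\frac{\partial\zeta}{\partial q^i} - \frac{\partial L}{\partial\dot q^i}\right).
\end{equation*}

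Substituting the value of $g$, clearing the denominator $\partial\zeta/\partial z$, and finally replacing $L$ by $\xi(\zeta)$ turns this last relation verbatim into \eqref{eq:inv_problem_pde}. This settles both directions at once: if $\xi=\xi^\zeta_L$ then $L=\liedv{\xi}\zeta$ and $\eta=\eta^\zeta_L$ realizes the required contact form, so the coefficient equations -- hence \eqref{eq:inv_problem_pde} -- hold; conversely, if \eqref{eq:inv_problem_pde} and $\partial\zeta/\partial z\neq0$ hold, then setting $y_i=-\partial(\xi(\zeta))/\partial\dot q^i$ and $g$ as above produces a form $\eta$ with $\liedv{\xi}\eta=g\,\eta$, and \cref{thm:herglotz_distribution_inverse_problem2} identifies $\xi$ as an extended Herglotz vector field for $\zeta$ with Lagrangian $L=\xi(\zeta)$. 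The main point requiring care is the bookkeeping of the conformal factor $g$ and verifying that the functions $y_i$ are forced rather than free; a secondary technical point is that $\eta$ must genuinely be a contact form, i.e.\ the recovered Lagrangian $L=\xi(\zeta)$ must be $\zeta$-regular, which is the non-degeneracy of $\bigl(\partial^2 L/\partial\dot q^i\partial\dot q^j\bigr)$ implicit in applying \cref{thm:herglotz_distribution_inverse_problem2}.
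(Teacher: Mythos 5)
Your proposal is correct and follows essentially the same route as the paper's proof: reduce to \cref{thm:herglotz_distribution_inverse_problem2}, expand $\liedv{\xi}\eta = g\,\eta$ in the coordinate coframe, solve the $\dd\dot q^i$ and $\dd z$ components for $y_i$ and $g$ (using $\partial\zeta/\partial z\neq 0$), and substitute into the $\dd q^i$ component to obtain \eqref{eq:inv_problem_pde}. The only differences are cosmetic: your sign bookkeeping for $y_i$ is internally consistent (the paper's own computation mixes the conventions $\dd\zeta + y_i\,\dd q^i$ and $\dd\zeta - y_i\,\dd q^i$, with harmless typos), and you explicitly flag the $\zeta$-regularity of $L=\xi(\zeta)$ needed for $\eta$ to be a genuine contact form, a point the paper likewise leaves implicit.
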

\begin{proof}
    By Theorem~\ref{thm:inv_problem_contact_distribution}, $\xi$ is horizontally equivalent to a Herglotz vector field if and only if $\liedv{\xi} \eta = g \eta$ for some $g:TQ \times \mathbb{R} \to \RR$, where 
    \begin{equation}
        \eta = \dd \zeta(q,z) + y_i(q,\dot{q}, z) \dd q^i,
    \end{equation}
    and  $\frac{\partial \zeta}{\partial z} \neq 0$.

    That is,
    \begin{equation}
        \liedv{\xi} \eta = \dd \xi(\zeta) - \xi (y_i) \dd q^i - y_i \dd \dot{q}^i = g \eta = 
        g\left(\frac{\partial \xi(\zeta)}{\partial z}\dd z + \left(\frac{\partial \zeta}{\partial q^i} -  y_i \right) \dd q^i \right) ,
    \end{equation}
    or, contacting with every coordinate basis vector field, we find that $\xi$ is horizontally equivalent to a Herglotz vector field if and only if  $\frac{\partial \zeta}{\partial z} \neq 0$ and
    \begin{equation}\label{eq:inv_problem_pdes}
        \begin{dcases}
            \frac{\partial \xi(\zeta)}{\partial q^i}  - \xi (y_i) &= g\left(\frac{\partial \zeta}{\partial q^i} -  y_i\right),\\
            \frac{\partial \xi(\zeta)}{\partial \dot{q}^i} -   y_i &= 0,\\
            \frac{\partial \xi(\zeta)}{\partial z} &= g \frac{\partial \xi(\zeta)}{\partial z}.
        \end{dcases}
    \end{equation}

    We now prove that~\eqref{eq:inv_problem_pdes} is equivalent to~\eqref{eq:inv_problem_pde}.

    Assume that $\zeta$ fulfills the equations~\eqref{eq:inv_problem_pdes}. Then, solving for $y_i$ and $g$ in the second and last equation and substituting on the first one, we obtain
    \begin{equation}
        \frac{\partial \xi(\zeta)}{\partial q^i}  - \xi \left(\frac{\partial \xi(\zeta)}{\partial \dot{q}^i} \right)   = 
        \frac{\frac{\partial \xi(\zeta)}{\partial z}}{\frac{\partial \zeta}{\partial z}}
        \left(\frac{\partial \zeta}{\partial q^i} -\frac{\partial \xi(\zeta)}{\partial \dot{q}^i}\right),
    \end{equation}
which, after reordering terms, is Equation~\eqref{eq:inv_problem_pde}.

Conversely, if there exists $\zeta$ that solves~\eqref{eq:inv_problem_pde} and  $\frac{\partial \zeta}{\partial z} \neq 0$, we define

\begin{equation}
    \begin{dcases}
        y_i &= \frac{\partial \xi(\zeta)}{\partial \dot{q}^i},\\
        g &= \frac{\frac{\partial \xi(\zeta)}{\partial z}}{\frac{\partial \zeta}{\partial z}},
    \end{dcases}
\end{equation}
so that~\eqref{eq:inv_problem_pdes} are satisfied.

\end{proof}

$$\xi = \dot{q}^i \frac{\partial}{\partial q} + b^i \frac{\partial}{\partial \dot{q}} + a \frac{\partial}{\partial z},$$
then, the local expression of \eqref{eq:inv_problem_pde} is:

$$
b^j\frac{\partial^2a}{\partial \dot{q}^i\partial \dot{q}^j}+\dot{q}^j\frac{\partial^2a}{\partial \dot{q}^i\partial q^j}-\frac{\partial a}{\partial q^i}=\frac{\partial \zeta}{\partial z}\left( \frac{\partial a}{\partial \dot{q}^i}\frac{\partial a}{\partial z}-a\frac{\partial^2a}{\partial \dot{q}^i\partial z}\right),
$$
which we will rewrite as $D_i=\frac{\partial \zeta}{\partial z} E_i$. These algebraic equations provide constraints on $a$ and $b^i$, specially because $\frac{\partial \zeta}{\partial z}$ is different form $0$ everywhere and it doesn't depend on velocities. Thus, for any point $p\in TQ\times\mathbb{R}$, we must have that

$$D_i(p)=0 \iff E_i(p)=0.$$

In the points and indices $i,j$ where they are different from $0$, we have that:

\begin{equation}
    \begin{dcases}
    \frac{D_i}{E_i}=\frac{D_j}{E_j}\,;
      \\
     \frac{\partial}{\partial \dot{q}^k}\frac{D_i}{E_i}=0\,,\quad \forall k\,.      
    \end{dcases}
\end{equation}

In the case that $a$ and $b^i$ doesn't depend on $z$, we trivially recover that $b^i$ must be a solution of the Euler-Lagrange equations of $a$, thus $\xi$ should be a symplectic Lagrangian vector field.

\section{Conclusions and further research}

In this paper we state the inverse problem and the equivalent Lagrangians problem ~\enquote{up to a change on $z$} in the contact setting. 

In order to do so we first introduce extended systems  and state the extended contact Lagrangian systems. This is a generalization of contact Lagrangian systems where the manifold is not the trivial decomposition $TQ\times \mathbb{R}$. This object allows us to consider smooth changes in the $z$ variables, which we called horizontal diffeomorphism. We show that, if a SODE is $\rho$-projectable (that is, the accelerations does not depend on $z$), then all its horizontally equivalent SODEs are also $\rho$-projectable.

Equivalent Lagrangians are defined by means of horizontal diffeomorphism. For the particular case where we have a strong horizontal equivalence (that is, it does not depend on velocities), we give different characterization of equivalent Lagrangians. We also provide several examples to explore the limits and applications of horizontal equivalence.

We provide a geometrical characterization of Herglotz vector fields similar to the one obtain by~\cite{Sarlet1982} for the symplectic framework. Some results for the inverse problem for extended Lagrangian systems are proved.

It is needed to continue the work on the inverse problem in contact Lagrangian mechanics. Even though the naive statement turns out to be trivial, in the extended contact Lagrangian systems we allow to choose the action function which give too much freedom. Nevertheless, (maybe with some appropriate extra conditions)it would be interesting to find conditions equivalent to Helmholtz's in the classical problem.

One possibility is to analyze weather a given SODE $\xi$ on $TQ$ can be extended with an extra action variable so that it can be derived from a Herglotz principle, that is $\xi = \rho_* \xi_{L, \zeta}$, from some projectable extended Herglotz vector field $\xi_{L, \zeta}$. That is, for example the case of the parachute equation, which cannot be obtained from Hamilton's principle but can be derived from the Herglotz principle.

Other topic which can be studied is the relation of this two problems with their symplectic counterparts through symplectization. 

Another interesting future research is in contact field theories. In \cite{GGMRR-2019,GGMRR-2020} the $k$-contact framework is proposed as a generalization of contact mechanics to field theories. It will be specially interesting to study equivalent Lagrangians, as the concept has been already used General Relativity to simplify Einstein-Hilbert Lagrangian.

\section*{Acknowledgements}
Manuel de León and Manuel Lainz acknowledge financial support from the Spanish Ministry of Science and Innovation (MICINN), under grants PID2019-106715GB-C21, ``Severo Ochoa Programme for Centres of Excellence in R\&D'' (CEX2019-000904-S) and from the Spanish National Research Council (CSIC), through the ``Ayuda extraordinaria a Centros de Excelencia Severo Ochoa'' (20205-CEX001). Manuel Lainz wishes to thank MICINN and the Institute of Mathematical Sciences (ICMAT) for the FPI-Severo Ochoa predoctoral contract PRE2018-083203.
\printbibliography

\end{document}